\definecolor{mygray}{RGB}{211,211,211} 
\newtheorem{theorem}{Theorem}
\newtheorem{condition}{Condition}
\newcommand{\IP}{\mathbb{P}}
\useunder{\uline}{\ul}{}
\keywords{Watermark, Latent diffusion model, Watermark removal}
\author{De Zhang Lee}
\affiliation{%
  \institution{National University of Singapore}
  \country{Singapore}
  }
\email{dezhanglee@comp.nus.edu.sg}
\author{Han Fang}
\affiliation{%
  \institution{National University of Singapore}
  \country{Singapore}
  }
\email{fanghan@nus.edu.sg}
\author{Hanyi Wang}
\affiliation{%
  \institution{Shanghai Jiao Tong University}
  \country{China}
  }
\email{why_820@sjtu.edu.cn}
\author{Ee-Chien Chang}
\affiliation{%
  \institution{National University of Singapore}
  \country{Singapore}
  }
\email{changec@comp.nus.edu.sg}
\def\@ACM@copyright@check@cc{}
\begin{document}

\title{Removal Attack and Defense on AI-generated Content Latent-based Watermarking}
\begin{abstract}

Digital watermarks can be embedded into AI-generated content (AIGC) by initializing the generation process with starting points sampled from a secret distribution. When combined with pseudorandom error-correcting codes, such watermarked outputs can remain indistinguishable from unwatermarked objects, while maintaining robustness under whitenoise. In this paper, we go beyond indistinguishability and investigate security under removal attacks. We demonstrate that indistinguishability alone does not necessarily guarantee resistance to adversarial removal. Specifically, we propose a novel attack that exploits boundary information leaked by the locations of watermarked objects. This attack significantly reduces the distortion required to remove watermarks—by up to a factor of $15 \times$ compared to a baseline whitenoise attack under certain settings. To mitigate such attacks, we introduce a defense mechanism that applies a secret transformation to hide the  boundary, and prove that the secret transformation effectively rendering any attacker’s perturbations equivalent to those of a na\"ive whitenoise adversary. Our empirical evaluations, conducted on multiple versions of Stable Diffusion, validate the effectiveness of both the attack and the proposed defense, highlighting the importance of addressing boundary leakage in latent-based watermarking schemes.
\end{abstract}

\maketitle

\section{Introduction}
Recent developments in AIGC (AI-generated content) reignite interests in digital watermarking. Although extensive studies on digital watermarking over the past two decades lead to many robust watermarking schemes, it is still challenging to guard against knowledgeable attackers who are aware of the algorithms and have access to copies of watermarked objects.
Fortunately, the controllable AIGC generation processes provides new surfaces in addressing the challenge. 

Specifically, an AIGC method, for example a Latent Diffusion Model (LDM), can be viewed as a random source parameterized by a condition. Let us write  the generating process as 
${\tt L}(\mathbf{c},\mathbf{s})$
where $\mathbf{c}$ is the condition and $\mathbf{s}$ a random starting point sampled from a multivariate Gaussian distribution in a high dimensional (e.g. $d=4\times 64\times 64)$ real vector space.   Inversion based AIGC watermarking \cite{gunn2025undetectable, DBLP:conf/cvpr/YangZCF0Y24} replaces the original starting point in AIGC with crafted ``starting points'' drawn from a secret distribution.  During detection, given an object, it is first inverted to the starting point and then decides whether the inverted point indeed follows the secret distribution.  
Unlike traditional watermarking approaches, this method on AIGC does not require minimum distortion of a given host to embed the watermark.
Furthermore, by feeding in starting points derived from pseudorandom error correcting (PRC) codes where the codewords are computationally indistinguishable from uniformly sampled binary sequences \cite{DBLP:conf/crypto/ChristG24, gunn2025undetectable}, it is possible to attain indistinguishable (aka undetectable) watermarked objects.

Nonetheless, there are other challenges ahead. In addition to undetectability,  robustness against malicious removal attacks is an important concern.  When given a watermarked object, the attacker wants to slightly perturb it such that the perturbed object  is no longer watermarked. In contrast to the whitenoise attacker who simply adds random noise,  a malicious 
 removal attacker attempt to utilize  knowledge  of the process and observation of other watermarked objects, so as to craft a smaller distortion that would remove the watermark compare to random perturbation.  

Removal attacks can be carried out on the victim at different stages of the generation process. It can be on the final  generated object in the pixel domain \cite{DBLP:conf/nips/ZhaoZSVGKVWL24}, intermediate latent points \cite{DBLP:journals/corr/abs-2412-03283}, or all the way back to the starting point in the latent space.  In this paper, we consider a removal attack that distorts the starting point in the latent space. Such consideration is reasonable.  
{Firstly, the secret is injected into the starting point and thus it would be insightful to understand its robustness against malicious attacks.  Secondly, in practice, the attacker could have more compute resources
or other background information that 
allow for more accurate inversion.
 For security analysis of a defense mechanism, the threat model should assume an attacker with  high capability and able to  invert exactly.  On the other hand, to analyze the attack effectiveness from an attacker's perspective, we should also investigate attackers with various levels of capability. Hence, our  analysis on the defense mechanism consider attackers who can invert exactly (Theorem \ref{thm:mainresult}), and  the empirical studies  considers inversion both with and without errors (Section \ref{section:eval_sd2.1}).}

We formulate a removal attack scenario in which the attackers have access to multiple copies of watermarked objects (referred as {\em known watermark attack}), and compare the attack effectiveness against a nav\"ie whitenoise attacker.
Under the scenario, we propose   an attack targeting indistinguishable watermark,  specifically PRC ~\cite{gunn2025undetectable} and Gaussian  Shading~\cite{DBLP:conf/cvpr/YangZCF0Y24}.
The attack incurs significantly smaller perturbation when compared to whitenoise that attains the same probability of success. Furthermore, the attack is "stealthy", in the sense that the attacked starting point remains indistinguishable from the original (to someone who does not know the seed).
Under reasonable parameters (Figure \ref{fig:bits_flipped} (a)), the perturbation can be reduced by a factor of $15 \times$.   Essentially, this attack attempts to maximize the number of bits flipped in the Hamming space, while incurring small distortion in the real vector space, at the same time ensuring that  changes remain indistinguishable.
In a certain sense, the significant gain is attained by exploiting the fact that an watermarked object's location could leak information about its boundary, in particular, the direction of the nearest non-watermarked object.  Hence, even if the locations are indistinguishable from random points, effective removal could still be possible. 

To defend against the attack, we propose an additional layer of protection by applying a secret, randomly chosen orthonormal transformation to hide the shape of the boundary. While this approach is intuitively appealing, it turns out that certain designs of pseudorandom codes may still leak information. To address this, we introduce a “well-behaved” property, which ensures that all points in the neighborhood of a codeword are also classified as watermarked. With this property, we can  prove that any attack results in distortion indistinguishable from whitenoise. In other words, no attacker can gains an advantage over a na\"ive strategy that simply adds whitenoise.

To validate our analysis, we conduct experiments on Stable Diffusion (SD), considering scenarios where inversion errors occur during both the attack and detection processes. Under this setting (illustrated in Figure \ref{fig:asr_attack_setting}), we observe that our attack achieves a higher watermark removal success rate compared to an adversary limited to adding whitenoise, while maintaining high image fidelity between the watermarked and watermark-removed images.


\subsubsection*{Contributions}
\begin{enumerate}
\item {We propose a threat setting of removal attack in which the adversary has non-adaptive access to a watermarking oracle,  and formulate a method to compare attack effectiveness relative to a baseline whitenoise attacker.}

    \item We present an attack against undetectable watermarks in the latent space of LDMs. The attack exploits the observation that undetectable watermarking schemes might leak information about their detection boundaries. By leveraging this leaked information, the attack significantly reduces the adversarial perturbation required to successfully remove watermarks while remain stealthy.
    \item We propose a novel boundary hiding defense that hides the watermark detection boundary from the adversary. We prove that when the watermarking scheme is pseudorandom and meets a ``well-behaved'' property, and the transformation meets certain conditions, no adversary has an advantage over an adversary limited to adding whitenoise. 
    \item Empirical evaluations on multiple versions of Stable Diffusion demonstrate that our proposed watermark removal attack achieves a high success rate while preserving image quality between watermarked and unwatermarked images. Furthermore, the boundary-hiding defense incurs modest time and memory overhead, making it practical for real-world deployment.
\end{enumerate}

\section{Background and Notations}
\subsection{AIGC}
We focus on the latent diffusion model (LDM) for images generation. A typical LDM for images such as Stable Diffusion \cite{rombach2021highresolution} takes as input a text prompt, some meta-parameters such as number of steps, and a random initial starting point in the latent space of a variational autoencoder (VAE)  like  \cite{DBLP:journals/corr/KingmaW13}\footnote{Usually, the VAE used in the LDM pipeline is a pre-existing model fine-tuned using the data used to train the LDM}.  The generation process first converts the text prompt to a text embedding (e.g. using the CLIP encoder \cite{DBLP:conf/icml/RadfordKHRGASAM21}). Next, the LDM performs forward diffusion conditioned by the text embedding to generate a representation of the generated image in the latent space.  Finally, the VAE decodes the generated latent point into an image in the pixel space. 

We can consider a simplified abstraction.  Let us define the LDM as a deterministic function 
${\tt L} : \mathcal{C} \times \mathcal{L}  \rightarrow \mathcal{L}$ 
that  takes as input a condition $c \in \mathcal{C}$, a starting point $s$ in the VAE's latent space $\mathcal{L} $ and outputs the generated image  in the pixel  space $\mathcal{L}$.  That is, the generated object is:
$$  {\tt L} (c, s).$$
The latent space is a real $d$-dimensional  vector space,  i.e. $\mathcal{L} = \mathbb{R}^d$. The values of $d$ inStable Diffusion 1.5/2.1 and Stable Diffusion XL are $d= 4\times 64\times 64= 16,384$ and $d = 4 \times 128 \times 128 = 65,536$, respectively. Representation of the condition $\mathcal{C}$ is not essential in our analysis.

\subsubsection{Variational Autoencoder}\ \ 
It turns out that for images,  it is more effective to perform diffusion in the latent space, and then ``decode'' it to the pixel space.
The latent space is associated with an encoder $V_{enc}:\mathcal{P} \rightarrow \mathcal{L} $ that map an image in the pixel space to the latent space, and a decoder $V_{dec}: \mathcal{L} \rightarrow \mathcal{P} $ for the inverse.  Overall, the generated image  is 
$$ 
  V_{dec} ( {\tt L} (c,s) ).$$

\subsubsection{Starting point sampling} \ \ 
LDM is desgined with the goal that, when the starting points from $\mathcal{L}$ are sampled from a multivariate Gaussian distribution $\mathcal{N}_d(0,I)$, then the generated objects follow the distribution of the intended distribution specified by the condition (for e.g. a specific prompt describing the appearance of an object) in $\mathcal{P}$.  
Let $Gauss()$ be a probabilistic function that outputs a sample following  the multivariate Gaussian distribution $\mathcal{N}_d(0,I)$.   Hence, we can also express the  process as $$ 
V_{dec}({\tt L} (c, Gauss()))
$$



\subsubsection{Inversion}  
\label{sec:inversion} \ \ The inversion process $inv:  \mathcal{P} \rightarrow \mathcal{L} $ attempts to invert an image  to its starting point. It takes in an image $y$  and outputs a $x$ such that ${\tt L}(c, x) = y $ where $c$ is some condition.  

In practice, inversion incurs noise due to a few factors.  Firstly, the decoder and encoder are not exact inverse of each other. In addition, the condition $c$ might not be known during inversion. The inversion accuracy varies among different adversaries, for example, an adversary might be able to derive an accurate condition $c$ used to generate the image, have more compute resources in minimizing the error, or may have a more accurate inversion algorithm (e.g. \cite{Fang_Chen_Yang_Cui_Zhang_Chang_2025}).

\subsection{Pseudorandom Error Correcting Code}
Pseudorandom Error-Correcting (PRC) codes were introduced by \cite{DBLP:conf/crypto/ChristG24} as a primitive to construction undetectable watermark.
It consists of two components: the encoder and decoder.

\subsubsection*{\bf Encoder}
The encoder $Enc : \{0,1\}^n \times \{0, 1\}^{n_1} \rightarrow \{0, 1\}^{n_2} $ is a probabilistic algorithm, which takes in a $n$-bit key and a $n_1$-bit message
respectively, and outputs a codeword of length $n_2$.    We say that the encoder outputs a pseudorandom codeword if for any $k,m$  the  output  $Enc(k, m)$ is indistinguishable from a truly random sequence for ppt adversaries who do not know $k$.

\subsubsection*{\bf Decoder}
The decoder
$Dec : \{0,1\}^n \times \ \{0,1\}^{n_2} \rightarrow \{0, 1\}^{n_1} \cup \{ \bot \}$,
takes in a $n$-bit key and $n_2$-bit codeword, and outputs the $m$-bit
watermark message. The special symbol  $\bot$ indicates failure in decoding.
Section \ref{sec:indistinguiability} describes how a pseudorandom error correcting code can be incorporated to obtain an undetectable watermark for LDM.

The scheme is error-correcting in the sense that if a codeword is being perturbed by some number of bits not larger than a pre-defined threshold, the decoder will return the correct message.   The key length $n$ is the security parameter and both $n_1$ and $n_2$ are bounded by some polynomials of $n$.



\section{Formulation}

\subsection{LDM Watermarking} \label{subsection:ldm_watermark}

Recap that in a typical generation process, the author chooses a condition $c$, samples a starting point $s$ from a multivariate Gaussian distribution using a sampling function, and then feeds $s$ and $c$ into the forward diffusion process to obtain the generated latent
 $$ {\tt L} (c, Gauss() ).$$

LDM watermarking replaces the sampling $Gauss()$ by a key-sampling function.  As discussed in Section 1, we consider an attacker who can invert and operate on the starting point and thus focus on watermarking of the starting point. A watermarking scheme comprises of two  algorithms:
\begin{itemize}
    \item  $smp:\{0,1\}^{n}  \rightarrow \mathcal{L} $, a probabilistic  keyed-sampling function that generates a starting point.  
\item $det:\{0,1\}^{n} \times \mathcal{L}  \rightarrow \{0,1\}$, a deterministic keyed-detection function that decides whether a starting point is watermarked.
\end{itemize}

There are three phases.
\begin{enumerate}
\item {\em Setup.} The author chooses a private $n$-bit key.  

\item {\em Generation.} The author samples a starting point
$s= smp(k)$ using the key.    Let us call the generated point a watermarked codeword in the latent space.

\item {\em Detection.} Given a starting point $s\in \mathcal{L}$, 
 the author uses the secret $k$ to determine whether it  is watermarked.  The author computes  
                 $$det (k, s).$$
If the output is $1$,  then  $s$ is declared to be watermarked; otherwise, it is non-watermarked.
\end{enumerate}

The parameter  $n$ is the security parameter, and all polynomials referred to are with respect to $n$. 

\subsubsection{False Alarm} \ \ 
False alarm quantifies the chance that a typically generated object is wrongly declared to be watermarked, even though it is not sampled from $smp(k)$.  Given a detector $det$, define false alarm to be the smallest $\gamma$ such that
$$
  \IP [  {det}(k,   Gauss()  ) = 1  ]\leq \gamma,
$$
for any $k$,  where randomness is from the probabilistic $Gauss()$.  



\subsection{Threat Setting}
\label{sec:threatsetting}
We consider an attacker who has access to an oracle that generates watermarking codewords with the same secret key unknown to the attacker.   Let us call this setting a {\em known watermark attack}.  
There are two different attack goals: {\em distinguishability} and {\em removal}.

\subsubsection{Indistinguishability Under Known Watermark Attack}
\label{sec:indistinguiability}
Indistinguishability of watermarked objects was studied in \cite{DBLP:conf/crypto/ChristG24}.   We follow the same definition used in \cite{DBLP:conf/crypto/ChristG24}, with modifications to fit our formulation.

Let us consider a probabilistic polynomial time (ppt) adversary $\mathcal{A}^\mathcal{O}$ who has access to a probabilistic oracle ${\mathcal O}$ that outputs watermarked codewords using the key established during setup.

Specifically,  consider the  following game $IND_{\mathcal A^{\mathcal{O}}}^{smp }$.

\begin{enumerate}
\item A key $k$ of length $n$ is generated and sent to the oracle.
\item Choose a random bit $b_0$.  If $b=0$, generate $s=  Gauss () $; otherwise, generate 
$s =  smp(k) $.
\item  Output 1 if $b=\mathcal{A}^{\mathcal{O}} (1^{n}, y) $; otherwise output 0.
\end{enumerate}

We say that a watermarking scheme  generates indistinguishable watermarks,   if  for any ppt adversary $\mathcal{A}^{\mathcal{O}}$, the difference 
$$
    \left( \IP [ \mbox{IND}_{\mathcal A^{\mathcal{O}}}^{\langle smp, cd\rangle}(n)  = 1] - \frac{1}{2} \right)
$$
is negligible w.r.t. $n$.

\subsubsection*{Remarks}
\begin{enumerate}
    \item Indistinguishability can be realized using  a pseudorandom code $\langle Enc, Dec\rangle $ in this way.  First, designate a message $m_0$ to be ``watermarked''. Next, derive a starting points from   $Enc(k,m_0)$ by treating it as part of the random seed in $Gauss()$. Regardless of subsequent computations on the starting point, the final generated objects in $\mathcal{P}$ will still be indistinguishable.  
    
\item  
   We consider an arithmetic model of computation, that is, a real number occupies a cell and arithmetic operations can be done in a single step.  Since algorithms considered here have constant depth of multiplications (for instance, the proposed secret transformation is a matrix multiplication), polynomial time algorithm under the arithmetic model can be made polynomial time under Turing model on rounded reals.

\end{enumerate}

\begin{figure*}
    \centering
    \includegraphics[width=\linewidth]{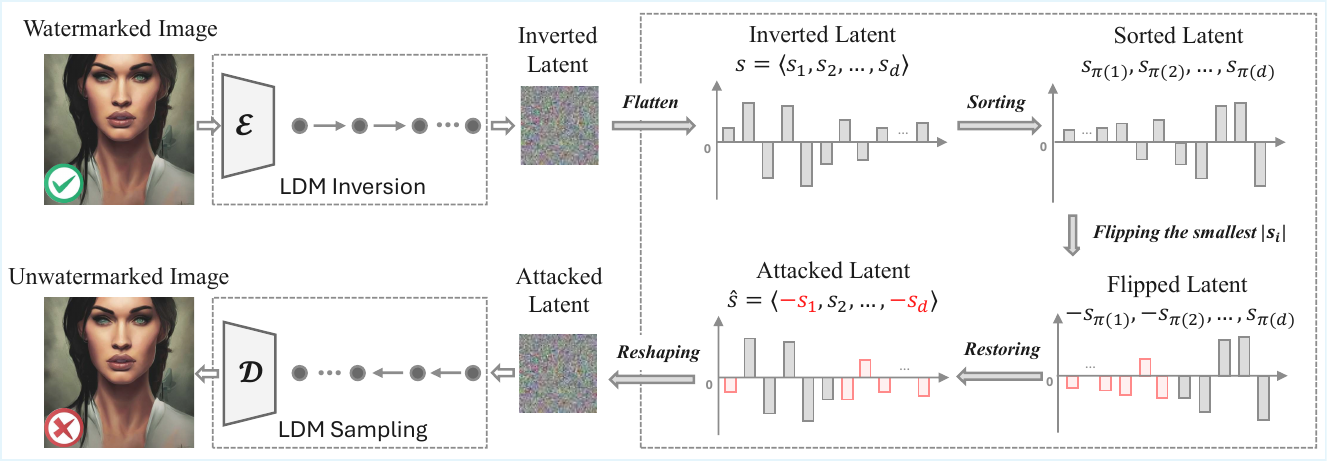}
    \caption{Illustration of the stealthy attack, which targets and flips latent dimensions with the smallest absolute values. This strategy preserves the multivariate Gaussian distribution of the latents while flipping more bits in the watermark codeword compared to the whitenoise attack.}
    \label{figure:attack_flowchart}
\end{figure*}

\subsubsection{Removal Under Known Watermark Attack}
Under a removal attack, the adversary perturbs its victim such that it is not longer watermarked. Similarly, we consider a ppt adversary who has access to the same oracle defined above.  

Let us consider  the following game $\mbox{REM}_{\mathcal{A}^\mathcal{O}}^{ smp, {det} } (n,\epsilon)$.
\begin{enumerate}
\item A key $k$ of length $n$ is generated and sent to the oracle.
\item Generate $s=  smp(k)$.
\item Adversary outputs $\widetilde{s}=\mathcal{A}^\mathcal{O} ( 1^n, s, \epsilon)$.  
\item Let $b=1 $ iff ${det}(k, \widetilde{s}\ )=0$ and $\|s-\widetilde{s}\|_2 \leq \epsilon $.
\item Output $b$.
\end{enumerate}

We say that a ppt $\epsilon$-adversary 
$\mathcal{A}^{\mathcal{O}}$  attains  $\delta$ success rate against the scheme $\langle smp, det\rangle$ if
$\IP [\mbox{REM}_{\mathcal{A}^{\mathcal{O}}}^{smp, {det}} (n, \epsilon) = 1] \geq \delta$ for sufficiently large $n$.

\subsubsection{Whitenoise Adversary}
A whitenoise adversary is a removal attacker who simply adds normalized Gaussian whitenoise to the starting point.  It does not access the oracle and thus in a certain sense, it is a ``dumb'' attack strategy that does not make use of other information. Nevertheless, this attack still succeeds with a sufficiently large noise.  Thus, the whitenoise attacker can serve as a baseline for analysis. 

Let us denote $\mathcal{A}^w_{\tau}$ the whitenoise adversary who introduces  noise with level $\tau$. The $\tau$-whitenoise adversary, when given the targeted $s$, outputs
$$
      \mathcal{A}^w_\tau ( 1^n, s ) = s + \tau \cdot \hat{n} (noise())
$$
where the probabilistic  $noise()$  is sampled from  $\mathcal{N}_d(0,I)$ and 
$\hat{n}(\cdot)$ is the normalization function 
$$ \hat{n} (x) = {x}/{\| x\|_2}, \mbox{\ \ \   for all} \ \ x.$$

We say that an $\epsilon$-adversary $\mathcal{A}^{\mathcal{O}}$ has $\delta$-advantage over whitenoise removal attack   against the scheme $\langle smp, det\rangle$ if, for any $\tau \leq \epsilon$,
 $$
 \IP [\mbox{REM}^{smp}_{\mathcal{A}^{\mathcal{O}} }(n,\epsilon ) =1] -
 \IP [\mbox{REM}^{smp}_{\mathcal{A}^w_{\tau}} (n, \epsilon ) =1] < \delta
 $$
for  sufficiently large $n$.

\subsubsection{Stealthiness}
From a removal attacker's point of view, it is desired to be stealthy in the sense that its output remains indistinguishable w.r.t. a multivariate Gaussian distribution.
Recap that the diffusion process requires the starting point to follow the multivariate Gaussian distribution for good visual quality. Otherwise, the diffusion model may end up generating a lower quality object.  Hence, to preserve quality,  it is desired that the removal is stealthy and does not alter the underlying distribution of the latents.
To illustrate the need for such a requirement, our experiment considers an attack that incurs lower distortion but non-stealthy, and demonstrates that it leads to significant degradation of visual quality. 

\section{Proposed attack} 

This section first describes two known schemes and then demonstrates that there are attacks with a significant advantage against a whitenoise adversary. Two attacks are proposed, a stealthy attack and a non-stealthy attack that minimizes its distortion.  

\subsection{Constructions of $
\langle smp, {det}\rangle$}
Table \ref{table:generation_method} summarizes  constructions by PRC~\cite{gunn2025undetectable} and Gaussian  Shading~\cite{DBLP:conf/cvpr/YangZCF0Y24}.  Both generate binary sequences in  step (1), and then together with additional truly random sources, transform it to multivariate Gaussian sample in step (2).  

PRC attains indistinguishability under known watermark attack, whereas  Gaussian Shading attains a weaker form of indistinguishability.  Gaussian Shading generates watermarked objects that are computationally indistinguishable from the multivariate Gaussian where the distinguisher is only given a single watermarked object, but might fail when the distinguisher sees other copies.


\begin{table*}[]
\centering
\caption{Two SOTA methods (PRC and Gaussian Sharding) that generate watermarked starting points, and each attaining a different form of closeness to the $d$-dimensional multivariate Gaussian $\mathcal{N}_d(0,I)$.}
\label{table:generation_method}
\resizebox{\linewidth}{!}{%
\begin{tabular}{p{0.1\linewidth} | p{0.8\linewidth}}
\hline
Method &
  Description \\ \hline
Gaussian Shading\cite{DBLP:conf/cvpr/YangZCF0Y24} &
  \begin{enumerate}
      \item 
  An $m$-bit binary message is repeated $(d/m)$ times and encrypted via a stream cipher to yield a sequence of  $m$-bit blocks $b_1, \dots,   b_{d/m}$. 
  \item Each block $b_i$ selects the $b_i$-quantile among the $m$-dimensional multivariate  Gaussian distribution, from which the latent value   $x_i$ is sampled via the inverse CDF. 
  \item During detection, given a latent point, determine the quantile of each coefficient and decrypt them. Next, correct the error by using the majority vote.
  \end{enumerate}\\
   \hline
PRC \cite{gunn2025undetectable} &
  \begin{enumerate}
      \item Generate $b_1, \dots, b_d$,  a $d$-bit pseudorandom  codeword for the given message.
       \item Sample $z_1,   \dots, z_d$ independently from $\mathcal{N}(0,1)$ and  set $x_i = (-1)^{1-b_i}|z_i|$. 
       \item During detection, given a latent point, compute the sign of each coefficient and then employ the error-correcting decoder to recover from error.

  \end{enumerate} \\ \hline
\end{tabular}%
}
\end{table*}

\subsection{Removal Attack} \label{section:attack}

Let us describe the removal attack on the PRC watermarking scheme, although the attack can be adapted for Gaussian Shading.

\subsubsection{Stealthy Removal Attack}
The attack takes in two inputs,  a starting point $s=\langle s_1, \ldots, s_d \rangle$ and a real number $\epsilon>0$. Essentially, it flips the maximum possible number of coefficients constrained by the total distortion being smaller than $\epsilon$. Specifically, it performs the following.
\begin{enumerate}
    
        \item Sorts the absolute values of $s$ to find the permutation   $\pi$ 
such that $
   |s_{\pi(1)}| \leq |s_{\pi(2)}| \leq \dots \leq |s_{\pi(d)}|$.
   \item Finds   $  i_0 = \max \{ \ k\  | \ \sum_{i=1}^k 2|s_{\pi(i)}|^2 \leq  \epsilon \}$.
   
        \item 
          Assigns $$
             \hat{s}_i = \begin{cases*}
                    - s_i & if  $ \pi(i) \leq i_0$  \\
                     s_i & Otherwise. 
                 \end{cases*} 
          $$

    \item   Outputs $\hat{s}=\langle \hat{s}_1,\ldots, \hat{s}_d\rangle$
\end{enumerate}
An illustration of this attack is given in Figure \ref{figure:attack_flowchart}.
It can be shown that the attack is stealthy, in the sense that both the initial starting point $s$ and the attacked latent $\hat{s}$ admits the same multivariate Gaussian distribution. Let us omit the proof. 

\subsubsection{Minimal Distortion Attack}  This attack is the same as  the stealthy removal attack, except that it takes in an additional parameter $\gamma$. In this attack, step (3) is replaced by the following:
\begin{enumerate}
\item[(2*)] Finds   $  i_0 = \max \{ \ k\  | \ \sum_{i=1}^k |s_{\pi(i)}|^2 \leq  \epsilon \}$.
\item[(3*)] 
 Assigns $$
             \hat{s}_i = \begin{cases*}
                    - (\gamma/i_0) sign (s_i)  & if  $ \pi(i) \leq i_0$  \\
                     s_i & Otherwise. 
                 \end{cases*} 
$$
\end{enumerate}
where $sign(s_i) = 1$ if $s_i >0$, and -1 otherwise.  Note that in step 2*, we did not multiply $|s_{\pi(i)}|$ by 2 as we did not flip the latents in the minimal distortion attack. 
 
Unlike the Stealthy Removal Attack, the selected coefficients are flipped to the same value $(\gamma/{i_0})$.  However, this modifies the distribution of the latents, and thus the attack is not stealthy and can be easily detectable.  Note that no perturbation can flip more coefficients and yet incur a distortion smaller than $(\epsilon + \gamma)$. In other words, this attack is optimal up to the constant overhead of $\gamma$.

\subsection{Attack Performance}
Under a whitenoise adversary, an individual latent has an equal chance increasing or decreasing. Hence, both proposed attacks will have a higher chance of flipping the signs of the individual latents, as it will always perturb them in the correct direction needed to flip their signs.  Section \ref{section:experiment} gives the empirical results on the effectiveness of the attack.

\subsubsection{Minimal Distortion vs Stealthy Attack.}
While minimal distortion attack outperforms the stealthy removal attack in the number of individual latents flipped, it disrupts the original distribution of the latents. Since many LDMs assume that the starting point is multivariate Gaussian, this adversely impact visual quality, thus rendering the attack unpractical. To illustrate, 
Figure \ref{fig:mindistort_fail}~(c) shows an image generated from the minimal distortion attack, which has poor visual quality compared to the stealthy attack (Figure \ref{fig:mindistort_fail}~(b)). 

\subsubsection{Relationship with Indistinguishability} The successful removal attack does not contradict the fact that the watermarked objects are indistinguishable.  Although the ``location'' of watermarked objects in the latent space is indistinguishable, it is not necessary that the ``shape'' of their boundaries are indistinguishable from perfect spheres.  The attack exploits leaked information on the boundary to find nearby non-watermarked objects. This observation suggests that hiding the boundary is crucial in defending against a removal attack, which leads to our proposed defense method.


\section{Defense: Hiding The Boundary} \label{section:defense}
Removal attacks on indistinguishable watermarked objects are possible because the watermarked latent points do not form a perfect sphere (with the watermark codeword as the center), and the shape of the boundary is revealed to the attackers.  Intuitively, to mitigate the attack, applying a secret random norm-preserving mapping of the generated starting point should hide the shape of the boundary. Such randomness should render  any adversarial perturbation to whitenoise.  Indeed, that is our approach. Surprisingly, it turns out that this is not sufficient as there could be some poorly designed detector that contains a removal backdoor.  Fortunately, if the detector is ``well-behaved'', such vulnerability could be eliminated.
 
This section describes a desired property on the transformation to hide the boundary (Condition 1), a ``well-behaved'' property on the detector  (Section \ref{sec:wellbehaved}), and show the security of the proposed approach under removal attack (Theorem \ref{thm:mainresult}).

\subsection{Sampling With Post-transformation}
\label{section:transformation}
The proposed method introduces an additional step on an existing  scheme  $\langle smp, det\rangle$.   During setup,   a secret
invertible and norm-preserving transformation $t$ on $\mathcal{L}$ is chosen. The transformation is applied after $smp$, and its inverse is applied before $det$.   Let us write the enhanced scheme as $ \langle trn \circ smp,  det \circ itn \rangle$, where
\begin{enumerate}
\item  $(trn \circ smp)(  ( k,t   )) = t (smp(k))$.
\item $(det \circ itn) ( (k, t), y ) =  det (k, t^{-1} (y))$.
\end{enumerate}

We shall show that when the scheme is indistinguishable under known watermark attack and meets a ``well-behaved'' property, and the transformation meets certain conditions, then the enhanced scheme can render any ppt attack to whitenoise.

\subsection{Well-behaved Detector}
\label{sec:wellbehaved}
Before giving the general notion, let us take the PRC as an example. Given a starting point $s\in\mathcal{L}$, the PRC detector first projects $s$ to a binary sequence, and then determines whether its Hamming distance from the nearest codeword is within a threshold.  If so, $s$ is deemed watermarked.  PRC  is ``well-behaved'' in the sense that if given $s$ and its nearest codeword $s_0$,  it is possible to decide whether a point is watermarked without using the secret key.


Let us formulate the above. Consider a scheme $\langle smp, det \rangle$ and a deterministic mapping $map:\mathcal{L} \rightarrow \mathcal {M} $ where $\mathcal{M}$ is some normed space.
We say that the  mapping preserves $det$ if  for all $x_1, x_2$ and any $k$,
$$ (det(k, x_1) =  1)  \wedge  (map(x_1)= map(x_2)) \ \   \Rightarrow  \ \ (det(k, x_2)=1).$$  With respect to a key $k$, let us  call a projected point $y\in \mathcal{M}$ watermarked if one of its pre-image in $\mathcal{L}$ is watermarked,  i.e.  $y$ is watermarked iff $det(k, x) =1$ where $map(x) =y$ for some $x$. Let $W^{\langle det,map\rangle}_k$ be the set of watermarked points in $\mathcal{M}$.  Let   $S_{\gamma} (x) \subseteq \mathcal{M}$ be the sphere centered at $x$ of radius $\gamma$.   

We say that the scheme $\langle smp, det \rangle$ is {\em well-behaved} if there is a ppt $map$ that preserves $det$ and 
a threshold $\gamma$  such that with high probability,   all points in $S_{\gamma}(map(smp()) )$ are watermarked. That is, 
\[
\IP\left[ S_{\gamma}(map(smp(k)) ) \subseteq W^{\langle det,map\rangle}_k    \right]  = 1- neg(n)
\]
where randomness is over  $k$ and the  probabilistic $smp()$.



\subsection{Security Against Removal} \label{subsection:security_T_construction}

Our goal is to show, given a well-behaved indistinguishable scheme, and transformations meeting certain conditions (Condition \ref{condition:transformation}), no ppt adversary can perform better than the whitenoise adversary on the enhanced scheme.  To show that, we first consider an ideal setting when the watermarks are truly random and show that, information-theoretically,  no adversary can have an advantage against a whitenoise adversary. Next, we extend the result to a pseudorandom watermark codeword for the main result (Theorem \ref{thm:mainresult}).

\begin{condition} \label{condition:transformation}
Let us consider $T$,  a random variable of an invertible orthonormal linear transformation on $\mathcal{L}=\mathbb{R}^d$.
\begin{itemize}
\item
    Let $X_1, \dots, X_m$ be i.i.d. random variables of samples in $\mathcal{L}=\mathbb{R}^d$ where each follows distribution $\mathcal{N}(0,I)$
and  let $Y_i = T(X_i)$.  
\item Let $\text{supp}(X_i)$ and $\text{supp}(T)$ denote the support over the probability distribution
    function of the random variable $X_i$ and $T$. Furthermore, 
 let us write ${\bf x}=(x_1, \dots, x_m)$ and ${\bf y}=(y_1, \dots, y_m)$ be instances of the random variables  ${\bf X}=(X_1, \dots, X_m)$ and ${\bf Y}=(Y_1, \dots, Y_m)$. 
\item
Define $S_{\bf y} = \{(t^{-1}(y_1), \dots, t^{-1} (y_m)) : t \in \text{supp}(T)\}$ be the set of all possible values of $(x_1, \dots, x_m)$ given the fixed $ {\bf y}=(y_1, \dots, y_m)$.
\end{itemize}
    We say that   $T$ fulfills  Condition 1 if and only if , 
    \begin{enumerate}[(a)]
          
          \item ($T$ hides its input, except the norm) 
          For any $r, w \in \mathbb{R}^d$ where $||r||_2 = ||w||_2$, the random variables $T(r)$ and $T(w)$ follow the same distribution.
         

        \item  (No leakage of $T$ from multiple observations)    
       For any \ ${\bf y}=(y_1, 
    \ldots, y_m)$,   we have $T  =  (\ T| ({\bf Y}={\bf y})\ )$.  
        
    \end{enumerate}
     
\end{condition}
Condition~\ref{condition:transformation} characterizes the scenario where all instances of $X_1, \dots, X_m$ and $T$ remain equally likely even when $Y_1, \dots, Y_m$ are observed. Specifically, condition 1(a) states that regardless of the perturbation $r$ employed by the adversary, they cannot perform better than by using a randomly sampled perturbation. On the other hand, condition 1(b) ensures that an adversary observing $Y_1, \dots, Y_m$ gains no additional information about the transformation $T$.

\begin{lemma} \label{thm:requirements_of_A}
    Suppose $T$ is uniformly sampled from $f_A$, which is the Haar measure conditioned on orthonormality,  then $T$ satisfies Condition \ref{condition:transformation}. 
\end{lemma}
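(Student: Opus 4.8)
The plan is to verify Condition~\ref{condition:transformation}(a) and (b) separately, exploiting the defining invariance properties of the Haar measure on the orthogonal group $O(d)$. Throughout I take $T$ to be a Haar-random orthogonal matrix, which is the natural reading of ``the Haar measure conditioned on orthonormality.'' The key fact I will lean on is the \emph{bi-invariance} of the Haar measure: for any fixed orthogonal matrices $Q_1, Q_2$, the random variables $T$ and $Q_1 T Q_2$ have the same distribution. I will also use that $T$ is norm-preserving, so $\|T(x)\|_2 = \|x\|_2$ for every $x$, which is exactly why $T$ qualifies as an admissible norm-preserving transformation in the scheme.

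For part (a), I would argue as follows. Fix $r, w \in \mathbb{R}^d$ with $\|r\|_2 = \|w\|_2$. Since they have equal norm, there exists a fixed orthogonal matrix $Q$ with $Q r = w$ (any rotation carrying one sphere point to another of the same radius; if $r=w=0$ take $Q=I$). Then $T(w) = T(Qr) = (TQ)(r)$, and by right-invariance of the Haar measure $TQ$ is distributed identically to $T$. Hence $T(w) = (TQ)(r)$ has the same distribution as $T(r)$, which is precisely the claim. The main content here is just producing the connecting orthogonal map $Q$ and invoking right-invariance; this step is routine.

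For part (b), I must show that conditioning on the observation ${\bf Y} = (Y_1, \dots, Y_m) = (T(X_1), \dots, T(X_m))$ leaves the distribution of $T$ unchanged, i.e. $T$ and $(T \mid {\bf Y} = {\bf y})$ have the same law for every ${\bf y}$. The intuition is that the $X_i$ are i.i.d.\ standard Gaussian, whose distribution is itself rotationally invariant, so the observed $Y_i = T(X_i)$ carry no information pinning down $T$. Concretely, I would compute the joint law of $(T, {\bf X})$, push forward to $(T, {\bf Y})$ via the measure-preserving change of variables $x_i \mapsto y_i = T(x_i)$ (measure-preserving because each $T$ is orthogonal with unit Jacobian and because the Gaussian density depends only on $\|x_i\|_2 = \|y_i\|_2$), and check that the resulting joint density factors so that the conditional law of $T$ given ${\bf Y}$ equals its marginal. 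Equivalently, for a fixed orthogonal $Q$ one shows $(T, {\bf Y})$ and $(QT, {\bf Y})$ have the same law: writing $Y_i = T(X_i) = (QT)(Q^{-1}X_i)$ and using that $Q^{-1}X_i \stackrel{d}{=} X_i$ (rotational invariance of the Gaussian) together with left-invariance of the Haar measure ($QT \stackrel{d}{=} T$), the pair is invariant under $T \mapsto QT$, which forces the conditional distribution of $T$ to be Haar regardless of ${\bf y}$.

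I expect part (b) to be the main obstacle, for two reasons. First, it requires care with the joint distribution of a matrix-valued and several vector-valued random variables simultaneously, and the change-of-variables argument must correctly track that the Gaussian density is a function of $\|x_i\|_2$ alone so that it is untouched when we substitute $y_i = T(x_i)$. Second, one must be careful that the conditioning set $\{{\bf Y} = {\bf y}\}$ is a measure-zero event, so the statement is really about regular conditional distributions; the cleanest route is to avoid explicit densities for $T$ altogether and instead prove the invariance $(T,{\bf Y}) \stackrel{d}{=} (QT, {\bf Y})$ for all orthogonal $Q$, from which the desired conditional invariance follows by a standard disintegration argument. Part (a), by contrast, is essentially immediate from right-invariance once the connecting rotation $Q$ is exhibited.
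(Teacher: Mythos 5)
Your proposal is correct in substance and rests on the same underlying idea as the paper's proof — transitivity of the orthogonal group action plus invariance of the Haar measure — but you execute it at a genuinely different level of rigor. The paper's proof of both parts is a two-line appeal to ``uniformity of $f_A$ over $\mathrm{supp}(T)$'': for (a) it exhibits some $t$ carrying $r$ to $\tilde r$ and stops, which is your part (a) minus the explicit invocation of translation invariance; for (b) it merely asserts that all elements of $S_{\mathbf y}$ are equally likely. Your part (b) is strictly more informative than what the paper records: you identify that the rotational invariance of the Gaussian law of the $X_i$ is an essential ingredient — the paper never mentions it, yet without it Condition~\ref{condition:transformation}(b) fails (if the $X_i$ were, say, fixed known vectors, observing $\mathbf Y$ would pin $T$ down on their span) — and you correctly handle the measure-zero conditioning event via a joint-distribution invariance and disintegration, concluding by uniqueness of the invariant probability measure on the compact group $O(d)$.

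One slip in part (b) needs fixing. Under the standard convention $(QT)(x)=Q(T(x))$, your identity $T(X_i)=(QT)(Q^{-1}X_i)$ is false: the right-hand side is $QTQ^{-1}X_i$, not $TX_i$. The manipulation that works absorbs $Q$ on the \emph{right}: $T(X_i)=(TQ)(Q^{-1}X_i)$. Setting $S=TQ$ and $Z_i=Q^{-1}X_i$, right-invariance of Haar gives $S\stackrel{d}{=}T$, rotational invariance of the Gaussian gives $Z_i\stackrel{d}{=}X_i$, independence of $S$ from the $Z_i$ is preserved, and hence $(TQ,\mathbf Y)\stackrel{d}{=}(T,\mathbf Y)$. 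This yields right-invariance of the conditional law of $T$ given $\mathbf Y=\mathbf y$, and uniqueness of the translation-invariant probability measure on $O(d)$ still forces that conditional to be Haar. The side is not cosmetic: left translation $T\mapsto QT$ does not leave $\mathbf Y$ fixed (applying $QT$ to $X_i$ produces $QY_i$, not $Y_i$), so the invariance $(T,\mathbf Y)\stackrel{d}{=}(QT,\mathbf Y)$ cannot be derived the way you sketch it. Since you invoked bi-invariance of Haar at the outset, the repair costs nothing, and your conclusion stands.
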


\begin{proof}


We will first prove Condition 1(a). Since by definition, $T$ is an orthonormal transformation, $T(\cdot)$ is norm-preserving and  spans $\mathbb{R}^d$. That is, for any $r, \tilde{r} \in \mathbb{R}^d$ such that $||r||_2 = ||\tilde{r}||_2$, there exists a $t \in supp(T)$ such that $\tilde{r} = t(r)$. Since $f_A$ is uniform across $supp(T)$, Condition 1(a) is met. 

Next, we prove Condition 1(b).  Since $f_A$ is uniform across all choices of $supp(T)$, all choices across across $S_{\mathbf{y}}$ (as defined in Condition 1) are uniformly likely. Condition 1(b) is met.

   
\end{proof}

Let us define an ideal watermarking scheme $\langle {ideal}_{\tt s}, {ideal}_{\tt d} \rangle$ in the removal game (described in Section 3.2.2). 
On query,  the sampling $ideal_s$ outputs a starting point  $s$ following the distribution $\mathcal{N}_d(0,I)$.  In the game, the ideal detector ${ideal}_{\tt d}$ knows the starting point  $s$. Unlike the original detector, here,
${ideal}_{\tt d}$  has two inputs $s$ and $s'$, where $s$ is the starting point, and $s'$ is the adversary's output. 
Since the ideal detector is well-behave,  there is a ppt algorithm to decide whether $s'$ is watermarked by comparing it with $s$.

Let us further enhance the ideal scheme to the post-transformation scheme $\langle trn \circ ideal_s,  ideal_d\circ itn \rangle$. That is, a secret transformation $t$ is applied after and before $ideal_{\tt s}$ and $ideal_{\tt d}$, respectively.  Now, let us analyze the game $REM^{ideal_{\tt s}, ideal_{\tt d}}_{\mathcal{A}^{\mathcal{O}}} (n, \epsilon)$ in Lemma \ref{lemma:info_theoretic_optimal}, to show that it is not information theoretically possible for an adversary to perform better than a whitenoise adversary.

\begin{lemma} \label{lemma:info_theoretic_optimal}
 Consider the ideal scheme  $\langle trn \circ ideal_{\tt s},  ideal_{\tt d} \circ itn \rangle$.  There does not exist an $\epsilon$-adversary which has $\delta$-advantage over the  whitenoise attacker on $\langle trn \circ ideal_{\tt s},  ideal_{\tt d} \circ itn \rangle$, for all $\epsilon >0$ and $\delta>0$.
\end{lemma}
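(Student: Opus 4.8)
The plan is to work entirely in the latent space and reduce both the adversary's game and the whitenoise baseline to a single common success event governed by the \emph{effective} perturbation each induces after the inverse transformation is applied. Write $X_0 = t^{-1}(s)$ for the true (pre-transformation) starting point. The adversary's output $\widetilde{s} = s + r$ pulls back to $t^{-1}(\widetilde{s}) = X_0 + t^{-1}(r)$, while the whitenoise output $s + \tau\,\hat n(noise())$ pulls back to $X_0 + t^{-1}(\tau\,\hat n(noise()))$. Since $\langle ideal_{\tt s}, ideal_{\tt d}\rangle$ is well-behaved, $ideal_{\tt d}$ decides removal by comparing the pulled-back point with $X_0$ through the preserving map; hence in both games success is the \emph{same} fixed event $g(X_0, p) = [\,map(X_0 + p)\text{ lies outside the watermarked region around } map(X_0)\,]$, evaluated at the effective latent perturbation $p$. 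It therefore suffices to compare the joint law of $(X_0, p)$ in the two games.

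First I would condition on the adversary's entire view $\mathbf{y} = (s, Y_1, \dots, Y_m)$ together with its internal coins. Condition 1(b) gives that $T \mid (\mathbf{Y}=\mathbf{y})$ is still distributed as the prior $T$; after this conditioning the adversary's perturbation $r = r(\mathbf{y})$ is a \emph{fixed} vector while $T$ (equivalently $T^{-1}$, still orthonormal) is ``fresh.'' Applying Condition 1(a) to $T^{-1}$ then shows that the marginal law of the effective perturbation $p = T^{-1}(r)$ depends only on $\|r\|_2$ and is uniform on the sphere of radius $\|r\|_2$ — exactly the marginal law of the whitenoise effective perturbation at level $\tau = \|r\|_2 \le \epsilon$. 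At the level of marginals the adversary thus has no handle beyond selecting the norm $\|r\|_2$, a choice the whitenoise baseline already ranges over.

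The hard part will be the joint dependence between $p$ and the hidden center $X_0 = T^{-1}(s)$, since Condition 1(a) controls only the marginal of $p$. Both are images of fixed vectors under the same $T^{-1}$, so they are correlated, and the one scalar the adversary can actually steer is the radial component $\langle X_0, p\rangle = \langle s, r\rangle$ (the norm $\|X_0\|_2 = \|s\|_2$ and the $s$-direction being public). The adversary can fix this scalar to its most favorable value, whereas whitenoise leaves it random, so a priori $\IP[\mbox{REM}_{\mathcal{A}^{\mathcal{O}}}] \ge \IP[\mbox{REM}_{\mathcal{A}^w_\tau}]$ at matched $\tau$, and I must bound the gap. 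The argument I would give is that steering a single scalar out of a $d$-dimensional perturbation — while the orthogonal $d-1$ directions are forced uniform by Conditions 1(a)--(b), and the Haar symmetry of $T$ (Lemma \ref{thm:requirements_of_A}) makes $\IE[g(X_0,p)\mid\mathbf{y}]$ a function of $(\|s\|_2,\|r\|_2,\langle s,r\rangle)$ alone — moves the success event by an amount vanishing as $d$ grows with $n$; when $map$ is scale-invariant, as for the sign map of PRC, the radial component is in fact inert and the gap is negligible outright. Integrating over $\mathbf{y}$ and over $\|r\|_2 \le \epsilon$, the adversary's success exceeds the matched whitenoise adversary's by at most this vanishing term, so for sufficiently large $n$ no constant advantage $\delta > 0$ survives, which is the claim.
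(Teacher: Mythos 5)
Your first two paragraphs reproduce the paper's proof almost step for step: the paper likewise sets $r = y_1 - \widetilde{y}$, invokes Condition 1(b) to conclude that $T = (T \mid (Y_1,\dots,Y_m))$, hence that $R$ is independent of $T$, pulls both games back through $t^{-1}$ using linearity of $t$, and then concludes from Condition 1(a) that the transformed adversarial perturbation $T(R_\tau)$ and the transformed whitenoise $T(W_\tau)$ are identically distributed for every $\tau \le \epsilon$ --- and it stops there. Up to that point you and the paper coincide exactly.

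Your third paragraph, however, raises an issue the paper's proof passes over in silence, and your handling of it is where the gap lies. You are right that Condition 1(a) only equates the \emph{marginal} law of the effective perturbation $p = t^{-1}(r)$, that the detector's decision depends on the pair $(X_0, p)$ with $X_0 = t^{-1}(y_1)$, that orthonormality makes $\langle X_0, p\rangle = \langle y_1, r\rangle$ exactly steerable by the adversary while it is random under whitenoise, and that Haar invariance reduces $\IE[g(X_0,p)\mid \mathbf{y}]$ to a function of $(\|y_1\|_2, \|r\|_2, \langle y_1, r\rangle)$. But your closing claim --- that steering this single scalar moves the success probability by an amount vanishing as $d$ grows with $n$ --- is asserted without proof, $d$ is nowhere tied to $n$ in the paper's formulation, and at the lemma's stated generality the claim is false. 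Well-behavedness only requires the $\gamma$-ball around $map(smp(k))$ to be watermarked; the ideal detector that declares $s'$ non-watermarked precisely when $\langle s' - s, \hat{n}(s)\rangle > \gamma$ (with $map$ the identity) satisfies this, yet for any $\epsilon > \gamma$ the purely radial perturbation $r = \epsilon\, \hat{n}(y_1)$ pulls back to exactly $\epsilon\, \hat{n}(s)$ and succeeds with probability $1$, while whitenoise at any level $\tau \le \epsilon$ succeeds with probability exponentially small in $d$ --- an advantage near $1$ obtained through exactly the scalar you isolate. So your argument only goes through under an additional symmetry hypothesis on the ideal detector, e.g.\ the canonical ball detector together with a $map$ for which the radial direction is inert (your scale-invariance remark covers PRC's sign map), plus the explicit $O(1/d)$ bound on the equatorial-versus-full-sphere discrepancy that you gesture at but do not supply. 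It is only fair to add that this caveat applies equally to the paper's own proof, which never examines the joint law of $(X_0,p)$ at all and implicitly treats the marginal equality of the perturbations as determining the success probabilities.
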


\begin{proof} 
 Let  $y_1$ the given target that the adversary want to remove its watermark, and $y_2, \ldots, y_m$ be the output of the oracle.  Let $\widetilde{y}$ be the adversary's  output,   $r=  y_1- \widetilde{y}$, and $t$ the secret transformation. Let $Y_i$'s, $T$ and $R$ the respective random variables.
To extract any information on $T$, the adversary can only rely on the input $Y_1, \ldots, Y_m$.  
    Since Condition \ref{condition:transformation}(b) is met, 
$T =  (T|(Y_1, \dots, Y_m))$.  In other words, the adversary gains no information about $T$. Hence $R$ must be independent from $T$.
 

Consider the input to the ideal detector.  The ideal detector receives the original latent point, which is $t^{-1} (y_1)$ for some $t \in supp(T)$, and the inversely transformed  $t^{-1} (\widetilde{y})$.   By linearity of $t$, we have 
 $t^{-1}(\widetilde{y}) = t^{-1}(y_1) + t^{-1}(r)$.   
 
 Consider the situation when the ideal detector faces a whitenoise attacker.  Recap that the whitenoise adversary simply adds normalized whitenoise to $y_1$. Hence, the ideal detector receives the original $t^{-1} (y_1)$ and an inversely transformed whitenoise $t^{-1}( w)$ where $w $ is the whitenoise.

Let us write $R_{\tau}$ be the random variable $(R\ |\   ( \|R\|_2 =\tau)\ ) $, and $W_\tau$ be the noise  introduce by the $\tau$-whitenoise adversary ${\mathcal {A}^w_{\tau}}$. Note that by definition, $T$ is orthonormal and hence it preserves $\ell_2$ norm.

For any $\tau \leq \epsilon$,  by Condition 1(a) and the fact that $R$ is independent from $T$,
$T(R_{\tau})$ admits the same distribution as $T(W_{\tau})$. Since this holds for any $\tau\leq\epsilon$,  the result follows.

    

 
    
\end{proof}

\color{black}

\begin{theorem}
\label{thm:mainresult}
Given  a scheme $\langle smp,det\rangle$
that attains indistinguishable under known watermark attack (Section \ref{sec:indistinguiability}) and is well-behaved (Section \ref{sec:wellbehaved}), and $T$ that meets Condition 1. Consider the post-transformation scheme  $\langle trn\circ smp, det \circ itn \rangle$ with transformation $T$ on $\langle smp, det\rangle$ as defined in  Section \ref{section:transformation}. There does not
    exist a ppt $\epsilon$-adversary which has non-negligible advantage over the $\epsilon$-whitenoise attacker on  $\langle trn\circ smp, det\circ itn \rangle$,  for all  $\epsilon>0$.  
\end{theorem}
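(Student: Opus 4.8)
The plan is to lift Lemma~\ref{lemma:info_theoretic_optimal}, which already settles the ideal (truly random) case, to the pseudorandom case by using indistinguishability under known watermark attack as the bridge. I would show that for both the hypothesized adversary $\mathcal{A}^{\mathcal{O}}$ and the baseline $\mathcal{A}^w_\tau$, the removal success probability on the real post-transformation scheme $\langle trn\circ smp, det\circ itn\rangle$ is negligibly close to the success probability on the ideal post-transformation scheme $\langle trn\circ ideal_{\tt s}, ideal_{\tt d}\circ itn\rangle$. Granting these two approximations, for every $\tau\le\epsilon$ one obtains
\[
\IP[\text{REM}^{smp}_{\mathcal{A}^{\mathcal{O}}}]-\IP[\text{REM}^{smp}_{\mathcal{A}^w_\tau}]
\ \le\
\IP[\text{REM}^{ideal}_{\mathcal{A}^{\mathcal{O}}}]-\IP[\text{REM}^{ideal}_{\mathcal{A}^w_\tau}]+neg(n),
\]
and the bracketed ideal difference is non-positive by Lemma~\ref{lemma:info_theoretic_optimal}; hence no non-negligible advantage can survive on the real scheme.

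To prove each approximation I would assume, for contradiction, that the real and ideal removal probabilities differ by some non-negligible $\eta$ for one of the adversaries, and build a ppt distinguisher $\mathcal{D}$ against the IND game. Since the two schemes differ only in whether the target and the (at most polynomially many) oracle samples are drawn from $smp(k)$ or from $\mathcal{N}_d(0,I)$, I would set up a standard hybrid that switches pseudorandom codewords to Gaussian draws one sample at a time; a non-negligible end-to-end gap forces a non-negligible gap across some adjacent pair, which $\mathcal{D}$ exploits by embedding its IND challenge at that position and generating the remaining samples itself. Crucially, $\mathcal{D}$ draws its own secret transformation $t\sim f_A$ (it may, since $t$ is author-chosen), applies $t$ to every sample before handing them to the removal adversary, and upon receiving the output $\widetilde y$ it tests removal success \emph{without the key} by invoking the well-behaved $map$: it checks $\|s_0-t^{-1}(\widetilde y)\|_2\le\epsilon$ (using that $t$ preserves the $\ell_2$ norm) and whether $map(t^{-1}(\widetilde y))$ still lies in the watermarked region relative to the reference $map(s_0)$, then guesses accordingly. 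Its IND-advantage equals the removal-probability gap across the embedded switch, contradicting indistinguishability once summed over the polynomially many hybrids. Because $T$ is orthonormal and $\mathcal{N}_d(0,I)$ is rotation-invariant, the all-Gaussian endpoint of the hybrid is distributed exactly as the ideal post-transformation scheme, and applying the ppt-computable $t$ preserves computational indistinguishability throughout.

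I expect the crux to be the key-free detection inside $\mathcal{D}$: the distinguisher must reproduce the genuine removal event $det(k,t^{-1}(\widetilde y))=0$ using only $map$ and the reference $s_0$, never seeing $k$. This is exactly what well-behavedness buys, since $map$ preserves $det$, membership of $t^{-1}(\widetilde y)$ in the watermarked set is a function of $map(t^{-1}(\widetilde y))$ alone, and the guarantee $S_\gamma(map(smp(k)))\subseteq W^{\langle det,map\rangle}_k$ (with high probability) pins the watermarked region to a threshold test around the codeword $map(s_0)$ — the same comparison the ideal detector uses in Lemma~\ref{lemma:info_theoretic_optimal}. The delicate point is arguing that this comparison-based test agrees with the true detector on both endpoints of the hybrid rather than merely one-sidedly; this is precisely the step that rules out a ``removal backdoor'' in which $det$ behaves irregularly away from the codeword. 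A secondary, lighter obligation is to confirm via Condition~\ref{condition:transformation}(a)--(b) that sampling $t$ inside $\mathcal{D}$ yields a view identically distributed to the genuine post-transformation scheme, so that $\mathcal{A}$'s behaviour in the simulation matches its behaviour in the real game.
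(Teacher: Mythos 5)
Your proposal is correct and follows the same core strategy as the paper's proof: assume a ppt adversary with non-negligible advantage over whitenoise, use well-behavedness to decide removal success \emph{without} the secret key, invoke Lemma~\ref{lemma:info_theoretic_optimal} at the truly-random endpoint, and contradict indistinguishability under known watermark attack. The organization differs, though. The paper builds a single simulator $\mathcal{S}$ that, given one challenge sample $s$, draws its own $t$, hands $t(s)$ to both $\mathcal{A}^{\mathcal{O}}_0$ and the whitenoise adversary, applies the key-free proxy detector to both outputs, and guesses from the comparison of the two success bits ($b_0 \ne b_w$); you instead prove two separate ``real $\approx$ ideal'' closeness claims, one per adversary, and combine them with the ideal-case lemma. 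Your per-sample hybrid over the oracle answers is a genuine refinement rather than mere bookkeeping: Lemma~\ref{lemma:info_theoretic_optimal} requires \emph{all} observed samples --- the target and the oracle outputs --- to be truly Gaussian, so the paper's appeal to it at the random-challenge endpoint, where the oracle would still return pseudorandom codewords, implicitly needs exactly the hybrid you spell out, as well as the detail (also left implicit in the paper) that the reduction answers the removal adversary's oracle queries by forwarding them through its own IND oracle and applying $t$ to the replies. Finally, the ``delicate point'' you flag is real but is not resolved in the paper either: the formal well-behaved condition is one-sided (points inside $S_{\gamma}(map(smp(k)))$ are watermarked with high probability, with no guarantee outside the ball), whereas both your reduction and the paper's simulator need the comparison-based proxy test to agree with $det(k,\cdot)$ on the adversaries' outputs; the paper simply asserts this key-free decidability, consistent with its informal description of PRC, so your proposal stands or falls at exactly the same point as the paper's own argument, and you are right that this is the step that excludes the backdoored-detector counterexample.
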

\begin{proof}
Suppose not, i.e.,  there exists a ppt $\epsilon$-adversary $\mathcal{A}^\mathcal{O}_0$ who has $\delta$-advantage over the whitenoise attacker $\mathcal{A}^w$ for some $\epsilon$  and  non-negligible $\delta$ in the removal game. 
Let us  construct an ppt attacker  $\mathcal{S}$ 
that distinguishes output of $smp()$ from the truly random $\mathcal{N}_d(0,I)$ by simulating the removal game. 

Given a latent point $s$ which the simulator wants to distinguish, the simulator generates the transformation $t$ and passes $t(s)$ to the adversary. The simulate both $\mathcal{A}^{\mathcal{O}}_0$ and the whitenoise adversary $\mathcal{A}^w$. Let $y_0$ and $y_w$ be the output of $\mathcal{A}^{\mathcal{O}}_0$ and  $\mathcal{A}^w$ respectively.

Note that since the detector is well-behaved,  even without the secret key of the pseudorandom code, the simulator still can carry out detection.  In particular,  there exists a ppt algorithm that can makes the decision based on $(s, t^{-1}(y_0)) $ and $(s, t^{-1}(y_w))$.  Let $b_0$ and $b_w$ be the output of simulated detector on $y_0$ and $y_w$ respectively. 

If $b_0 = b_w$,  then the simulator outputs 0 or 1  with equal probability. If $b_0=0, b_w =1$ (i.e. ${\mathcal A}^\mathcal{O}_0 $ succeeds but not ${\mathcal A}^w$), then outputs 1. If $b_0=1, b_w=0$, outputs 0.   Since, by assumption ${\mathcal A}^\mathcal{O}_0$
 has non-negligible advantage over the whitenoise ${\mathcal A}^w$, the simulator can distinguish  the pseudorandom source $smp()$  which leads to contradiction. 
\end{proof}

\subsection{Well-behaved Detector Counterexample}
To illustrate that the well-behaved condition is necessary in Theorem \ref{thm:mainresult},  let us find a counter-example,  that is,  a scheme $<\overline{smp}, \overline{de}t>$ and $T$ that meets all the the conditions except that it is not well-behaved,  and yet there is an adversary that has advantage over whitenoise. 

Consider an adversary that performs the following:
\begin{enumerate}
  \item  Queries the oracle to get 2 watermarked objects $w_1$ and $w_2$.
  \item  To remove the watermark from a watermarked object $s_0$, the adversary outputs
   $s_0' = s_0 +  \delta_1 (w_1+w_2)$   where $\delta_1$ is a sufficiently small constant.
\end{enumerate}

On the receiving end, let us consider a modified detector based on an existing indistinguishable watermarking scheme. This detector is ``backdoored'' in this way:  on an input $s'$,  performs the following:
\begin{enumerate}
   \item Find the nearest codeword from  $t^{-1}(s')$, say $x'_0$.
   \item Determine whether $t^{-1}(s') - x'_0$ approximates  summation of two codewords, if so, declares $s'$ to be non-watermarked.  If not, it outputs the result when $s'$ is passed into the original detector.
\end{enumerate}

Without inversion noise, the modified detector declares $s_0'$ to be non-watermarked, for all $\delta_1$, and thus we have the counterexample.

\section{Performance Analysis}
\label{section:experiment}
The results in the previous sections assume exact inversion (i.e. the adversary has access to the exact latent starting point used to generate the image), and the analysis is based on $\ell_2$ norm on the latent space. Our  empirical studies consider a typical setup of Stable Diffusion  and analyzes the effect of inversion error and different metrics, including visual similarity, of   the generated images. 


\subsubsection*{Watermarking Baseline}
We evaluate the performance of our proposed watermark removal attacks on the Pseudorandom Codeword (PRC) watermarking scheme, utilizing the code implementation from \cite{DBLP:conf/crypto/ChristG24}. 
The PRC watermarking scheme is parameterized by message length $m$, which is the bit-lenght of messages that can be embedded,   and false positive rate (FPR) $f$, which is the probability that the decoder fails to output a message.  To convert the scheme to a  watermark detector, a message is designated as watermarked and the rest as non-watermarked.
Hence, the false alarm, i.e. the probability $p$ that a random latent vector is detected as watermarked  is $(1-f) \times 2^{-m}$.  To be consistent with the empirical studies in \cite{gunn2025undetectable}, we set the FPR parameter to be $f = 0.01$. Since $(1-f)$ is close to $1$, we can approximate the false alarm as $2^{-m}$.

\subsubsection*{Dataset and Model Used}
We employed Stable Diffusion 2.1 \cite{rombach2021highresolution} with its default parameters as the latent diffusion model, downloaded from HuggingFace\footnote{\url{https://huggingface.co/stabilityai/stable-diffusion-2-1}}. Images were generated using prompts sourced from the Stable Diffusion Prompt (SDP) dataset\footnote{\url{https://huggingface.co/datasets/Gustavosta/Stable-Diffusion-Prompts}},
using the downloaded model's default parameters over 50 timesteps. This is consistent with the evaluations performed in prior work \cite{gunn2025undetectable, DBLP:conf/cvpr/YangZCF0Y24}. In the additional evaluations performed in Appendix \ref{appendix:eval}, we did the same but on the Stable Diffusion 1.5\footnote{\url{https://huggingface.co/stable-diffusion-v1-5/stable-diffusion-v1-5}} and Stable Diffusion XL\footnote{\url{https://huggingface.co/stabilityai/stable-diffusion-xl-base-1.0}} models, both also using their default parameters. 

\subsubsection*{System Configuration}
The experiments were tested on a Linux machine of the following configuration.  Processor: Intel(R) Xeon(R) CPU E5-2620 v4 @ 2.10GHz;
RAM: 256GB;
OS:  Ubuntu 20.04.4 LTS;
GPU: NVIDIA Tesla V100.

\subsubsection*{Evaluation Metrics} 
To evaluate the effectiveness of our proposed attacks, we utilize the \textbf{Attack Success Rate (ASR)} metric, which is the probability that the attacked image is declared non-watermarked.  We also measure    \textbf{proportion of bits flipped} by the attack. The proportion of bits flipped relates to the ASR  since an attack would be successful if the proportion of bits flipped passed some pre-defined threshold. 

To assess the visual similarity between a watermarked image and its distorted version, we employ the following metrics:
\begin{itemize}
    \item \textbf{Root Mean Squared Error (RMSE)}: Measures the overall pixel-wise difference between two images.
    \item \textbf{Structural Similarity Index (SSIM)}: Evaluates the perceived quality and structural similarity between two images.
\end{itemize}

\subsection{Under Assumption of Perfect Inversion}
\label{sec:exactinversion}
In this set of experiments, we consider an attacker and a detector who can perform inversion exactly. 

\subsubsection{Number of Bits Flipped}
Let us analyze the proportion of bits flipped as a function of the $\ell_2$-norm distortion. Figure \ref{fig:bits_flipped} shows the results for both the Stealthy, Maximum Distortion, and whitenoise adversary across different perturbation levels. Additionally, Table~\ref{table:bit_flip_table} provides the ratio of bits flipped by the whitenoise adversary compared to the Stealthy adversary.
As shown in Figure \ref{fig:bits_flipped}, for a given level of distortion (measured by the $\ell_2 $ norm), the stealthy attack flips a significantly larger proportion of bits with a lower magnitude of distortion. 
For example, to flip 5\% of the bits, the Stealthy attack introduces a distortion with $\ell_2$ norm of $\sim3.0$, while the whitenoise attack requires a distortion which has $\ell_2$ norm of $\sim 45.0$ (which is approximately $15\times$ more). 
This aligns with our objective to maximize the number of flipped bits while minimizing the perturbation in the latent space, ensuring that the similarity between the original watermarked image and its corresponding watermark-removed version is preserved.

\paragraph{Minimum Distortion Attack}
The Minimum Distortion attack in Figure \ref{fig:bits_flipped} maximizes the number of bits flipped for a given distortion level. However, we do not apply this watermark removal attack in practice, as it fails to generate high-quality images under Stable Diffusion 2.1. We illustrate this occurrence in Figure~\ref{fig:mindistort_fail}. We attribute this failure to the Minimum Distortion attack altering the underlying distribution of the latents, thereby violating the assumption in Stable Diffusion 2.1 that the latents should follow a multivariate Gaussian distribution.

\begin{figure}
    \centering
    \includegraphics[width=0.75\linewidth ]{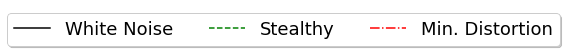}
    
    \begin{subfigure}[b]{0.24\textwidth}
        \centering
        \includegraphics[height=1.2in]{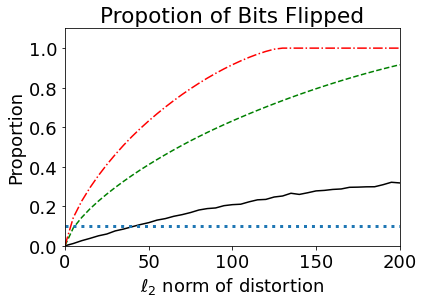}
        \caption{}
    \end{subfigure}%
    \begin{subfigure}[b]{0.24\textwidth}
        \centering
        \includegraphics[height=1.2in]{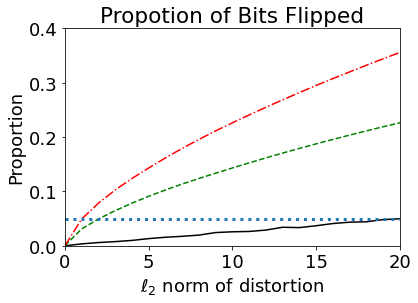}
        \caption{}
    \end{subfigure}%

    \caption{(a) The proportion of bits flipped in the latent space, represented by a vector of $d=4\times64\times64=16,384$ elements sampled from $\mathcal{N}_d(0,1)$, for distortions up to an $\ell_2$ norm of 200. (b) Proportion of bits flipped, focusing on distortions up to an $\ell_2$ norm of 20. The horizontal dotted line on both plots indicate the distortion required by three attack strategies to flip 5\% of the bits.}
    \label{fig:bits_flipped}
\end{figure}
\begin{table}[]
\centering
\caption{The proportion of bits flipped by three attacks (Minimum Distortion, White-Noise, and Stealthy) is shown for varying $\ell_2$-norm distortion levels ($\epsilon$). The rightmost column represents the ratio of bits flipped by the Stealthy adversary compared to the White-Noise adversary at each distortion level.}
\label{table:bit_flip_table}
\resizebox{\columnwidth}{!}{
\begin{tabular}{|c|ccc|c|}
\hline
\multirow{2}{*}{$\epsilon$} &
  \multicolumn{3}{c|}{Attack} &
  \multirow{2}{*}{\begin{tabular}[c]{@{}c@{}}Ratio of\\ W:S\end{tabular}} \\ \cline{2-4}
 &
  \multicolumn{1}{c|}{Min Distortion} &
  \multicolumn{1}{c|}{\begin{tabular}[c]{@{}c@{}}White-Noise\\ (W)\end{tabular}} &
  \begin{tabular}[c]{@{}c@{}}Stealthy\\ (S)\end{tabular} &
   \\ \hline
2.0  & \multicolumn{1}{c|}{0.078} & \multicolumn{1}{c|}{0.005} & 0.049 & 9.80 \\
4.0  & \multicolumn{1}{c|}{0.123} & \multicolumn{1}{c|}{0.009} & 0.078 & 8.67 \\
6.0  & \multicolumn{1}{c|}{0.161} & \multicolumn{1}{c|}{0.015} & 0.102 & 6.80 \\
8.0  & \multicolumn{1}{c|}{0.195} & \multicolumn{1}{c|}{0.019} & 0.123 & 6.47 \\
10.0 & \multicolumn{1}{c|}{0.226} & \multicolumn{1}{c|}{0.025} & 0.143 & 5.72 \\
12.0 & \multicolumn{1}{c|}{0.254} & \multicolumn{1}{c|}{0.028} & 0.161 & 5.75 \\ \hline
\end{tabular}
}

\end{table}

\begin{figure}
    \centering
    \includegraphics[width=0.49\linewidth]{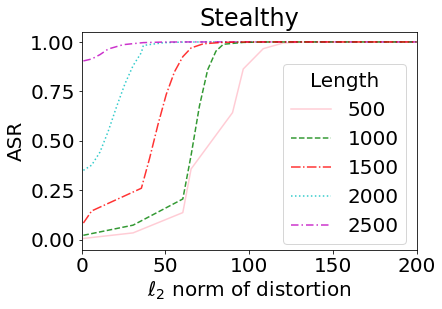}
    \includegraphics[width=0.49\linewidth]{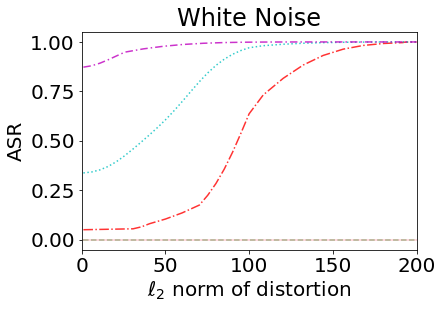}
    \caption{Attack Success Rate (ASR) of the white-noise and stealthy attack across various distortion levels and message lengths.}
    \label{fig:prc_adversary}
\end{figure}

\begin{figure}
    \centering
    \includegraphics[width=0.75\linewidth ]{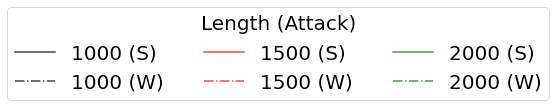}
    
    \begin{subfigure}[b]{0.24\textwidth}
        \centering
        \includegraphics[height=1.15in]{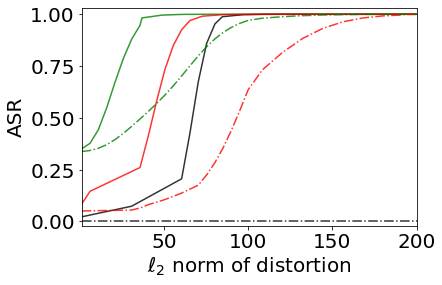}
        \caption{}
    \end{subfigure}%
    \begin{subfigure}[b]{0.24\textwidth}
        \centering
        \includegraphics[height=1.15in]{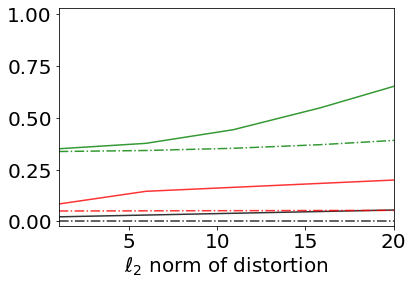}
        \caption{}
    \end{subfigure}%
    \\
    ~

    \caption{(a) Attack Success Rate (ASR) of the White-Noise (W) and Stealthy (S) attackers across different message lengths. (b) ASR plot focusing on distortions up to an $\ell_2$ norm of 20, as distortions exceeding this magnitude are likely to cause significant alterations to the image.}
    \label{fig:prc_asr_compare}
\end{figure}

\begin{figure*}
    \centering
        \begin{subfigure}[b]{0.32\textwidth}
        \centering
        \includegraphics[width=\linewidth]{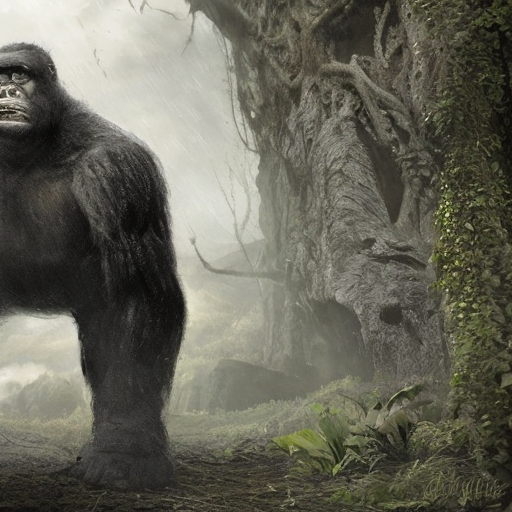}
        \caption{}
    \end{subfigure}%
    \begin{subfigure}[b]{0.32\textwidth}
        \centering
        \includegraphics[width=\linewidth]{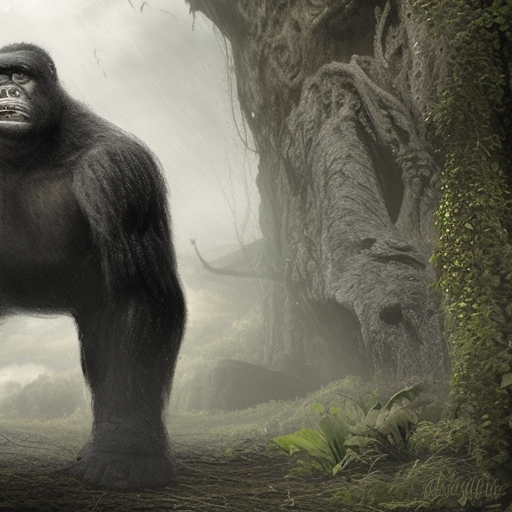}
        \caption{}
    \end{subfigure}%
    \begin{subfigure}[b]{0.32\textwidth}
        \centering
        \includegraphics[width=\linewidth]{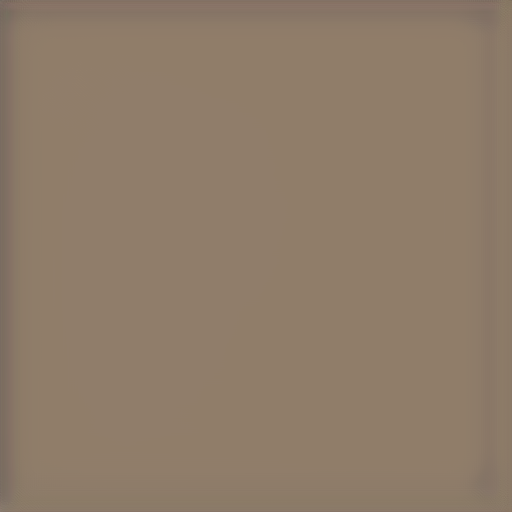}
        \caption{}
    \end{subfigure}%
    \caption{(a) The original watermarked image. (b) Outcome of the stealthy attack with an $\ell_2$-norm distortion of $8$ on the starting point, resulting in approximately $12\%$ bit flip. (c) Outcome of the minimum distortion attack with an $\ell_2$-norm distortion of $8$ on the starting point, achieving approximately $16\%$ bit flip.  The latent produced by the minimum distortion attack fails to produce a high-quality image under Stable Diffusion 2.1.}
    \label{fig:mindistort_fail}
\end{figure*}

\subsubsection{Distortion Required to Remove Watermark}

The impact of perturbation on watermark detection is depicted in Figure~\ref{fig:prc_adversary} for watermark message lengths of 500, 1,000, 1,500, and 2,000 bits. As expected, shorter watermark message lengths exhibit greater resilience to perturbations compared to longer ones. Moreover, for message lengths of 500 and 1,000 bits, the whitenoise adversary fails to remove the watermark even when the perturbation reaches an $\ell_2$ norm of 200. However, the stealthy attack proves significantly more effective, requiring considerably less perturbation (compared to the whitenoise adversary) to erase the watermark across all message lengths, as illustrated in Figure \ref{fig:prc_asr_compare}.

\subsubsection{Robustness under Secret Transformation}
Theorem \ref{thm:mainresult} dictates that there is no noticeable difference between the whitenoise attacker and the Stealthy attack on the enhanced scheme.   For double checking, we perform the experiments in Section \ref{sec:exactinversion} on the enhanced scheme, the results for Stealthy and whitenoise are within statistical errors.



\subsection{Evaluation on Stable Diffusion 2.1} \label{section:eval_sd2.1}


In this section, we evaluate our proposed attacks and defenses on Stable Diffusion 2.1 (the same model used in the evaluation of \cite{gunn2025undetectable}). Further evaluations on Stable Diffusion 1.5 and Stable Diffusion XL are provided in Appendix~\ref{appendix:eval}.
We consider three attacker capabilities (\textbf{AC}) that differ in their inversion accuracy.
\begin{enumerate}
    \item {\bf AC1} : {\em Exact inversion}. 
    \begin{enumerate}
        \item Given a watermarked image $I$, the adversary inverts without noise to get the  starting point $s$. 
        \item The adversary performs attack on $s$ to get $s'$. \item The adversary performs diffusion on  $s'$ with the original prompt and obtains the final image $I'$.
    \end{enumerate}
    
    \item {\bf AC2}: {\em DDIM inversion with original prompt}. 
        \begin{enumerate}
        \item Given a watermarked image $I$, the adversary inverts $I$ using DDIM with the original prompt to get $s'$.
        \item Adversary performs attack on $s$ to get $s'$.
        \item The adversary performs diffusion on $s'$ with the original prompt to obtain the final image $I'$. 
        \end{enumerate}
    \item {\bf AC3}:  {\em DDIM inversion with null prompt}. 
        \begin{enumerate}
          \item Given a watermarked image $I$, the adversary inverts $I$ using DDIM with the null  prompt to get $s'$.
        \item Adversary performs attack on $s$ to get $s'$.
        \item The adversary performs diffusion on $s'$ with the null prompt to obtain the final image $I'$. 
    
    \end{enumerate}
\end{enumerate}

We employ the PRC watermark detector implementation by \cite{gunn2025undetectable}, which is as follows:
 Given a watermarked image $I$, the detector performs DDIM inversion (conditional on the null prompt) to obtain the approximate latent $s$. Watermark detection is then done on $s$, 
 

We performed our evaluation on the (1) {\em whitenoise}, and (2) {\em Stealthy} attacks.
We left out the {\em Minimum Distortion} attack from our evaluation in this section, as the generated images are of low quality (see for example Figure \ref{fig:mindistort_fail}). 

The following metrics were used to evaluate the success rates of watermark removal and the similarity between watermarked and watermark-removed images.
\begin{enumerate}
\item  Attack Success Rate (ASR)
\item Structural Similarity Index Measure (SSIM)
\item  Root Mean Squared Error (RMSE) 
\end{enumerate}

With the proposed boundary-hiding transformation defense, the distortion introduced by the stealthy attack effectively becomes equivalent to that of the whitenoise attack after the transformation is inverted prior to watermark detection. Consequently, when the {\em Stealthy} attack is applied to the defended method, its effectiveness matches that of the {\em whitenoise} attack on the undefended method.

In total, we evaluate 6 different setups, varying along two dimensions: adversarial strategy and the inversion method. For each setup, 100 prompts were randomly selected from the prompt dataset and used to generate 100 images. The reported metrics (ASR, RMSE, SSIM) represent the average values computed across these 100 images per setup.

\begin{table*}[]
\centering
\caption{Comparison of our proposed attack versus the white-noise attack under three inversion settings—perfect inversion, DDIM conditioned on a null prompt, and DDIM conditioned on the same prompt—across varying perturbation levels (measured by the $\ell_2$ norm, on top of the inversion error introduced). We report Attack Success Rate (ASR) for watermark removal, as well as image similarity metrics: Root Mean Squared Error (RMSE) and Structural Similarity Index Measure (SSIM) between the original and attacked images, rounded to 4 and 2 significant figures, respectively. In all inversion settings, our proposed attack (\underline{underlined}) outperforms the white-noise baseline. For each perturbation level, the highest ASR is highlighted in \textbf{bold}.}
\label{table:eval_sd_attack}
\resizebox{\linewidth}{!}{%
\begin{tabular}{|l|l|cccccccccccrccc|}
\hline
\multirow{3}{*}{\begin{tabular}[c]{@{}c@{}}Attacker\\ Capability\end{tabular}} &
  \multirow{3}{*}{Attack} &
  \multicolumn{15}{c|}{Additional Perturbation, $\epsilon$ (Excluding Inversion Error)} \\ \cline{3-17} 
 &
   &
  \multicolumn{3}{c|}{4.0} &
  \multicolumn{3}{c|}{6.0} &
  \multicolumn{3}{c|}{8.0} &
  \multicolumn{3}{c|}{10.0} &
  \multicolumn{3}{c|}{12.0} \\ \cline{3-17} 
 &
   &
  ASR &
  RMSE &
  \multicolumn{1}{c|}{SSIM} &
  ASR &
  RMSE &
  \multicolumn{1}{c|}{SSIM} &
  ASR &
  RMSE &
  \multicolumn{1}{c|}{SSIM} &
  ASR &
  RMSE &
  \multicolumn{1}{c|}{SSIM} &
  ASR &
  RMSE &
  SSIM \\ \hline
\multirow{2}{*}{AC1} &
  White Noise &
  0.00 &
  0.0001 &
  \multicolumn{1}{c|}{0.96} &
  0.00 &
  0.0004 &
  \multicolumn{1}{c|}{0.93} &
  0.03 &
  0.0009 &
  \multicolumn{1}{c|}{0.92} &
  0.10 &
  0.0070 &
  \multicolumn{1}{r|}{0.75} &
  0.14 &
  0.0120 &
  0.71 \\
 &
  {\ul Stealthy} &
  {\ul \textbf{0.06}} &
  0.0001 &
  \multicolumn{1}{c|}{0.96} &
  0.06 &
  0.0005 &
  \multicolumn{1}{c|}{0.93} &
  {\ul 0.12} &
  0.0007 &
  \multicolumn{1}{c|}{0.91} &
  0.14 &
  0.0062 &
  \multicolumn{1}{r|}{0.85} &
  {\ul 0.17} &
  0.0091 &
  0.76 \\ \hline
\multirow{2}{*}{AC2} &
  White Noise &
  0.00 &
  0.0002 &
  \multicolumn{1}{c|}{0.97} &
  0.01 &
  0.0004 &
  \multicolumn{1}{c|}{0.95} &
  0.04 &
  0.0014 &
  \multicolumn{1}{c|}{0.93} &
  0.14 &
  0.0081 &
  \multicolumn{1}{r|}{0.73} &
  0.17 &
  0.0106 &
  0.72 \\
 &
  {\ul Stealthy} &
  {\ul \textbf{0.06}} &
  0.0001 &
  \multicolumn{1}{c|}{0.96} &
  {\ul \textbf{0.09}} &
  0.0004 &
  \multicolumn{1}{c|}{0.94} &
  {\ul 0.25} &
  0.0007 &
  \multicolumn{1}{c|}{0.92} &
  {\ul 0.38} &
  0.0079 &
  \multicolumn{1}{r|}{0.86} &
  {\ul 0.43} &
  0.0069 &
  0.81 \\ \hline
\multirow{2}{*}{AC3}  &
  White Noise &
  0.00 &
  0.0001 &
  \multicolumn{1}{c|}{0.97} &
  0.00 &
  0.0003 &
  \multicolumn{1}{c|}{0.94} &
  0.03 &
  0.0009 &
  \multicolumn{1}{c|}{0.91} &
  0.14 &
  0.0081 &
  \multicolumn{1}{r|}{0.76} &
  0.14 &
  0.0194 &
  0.73 \\
 &
  {\ul Stealthy} &
  {\ul 0.04} &
  0.0001 &
  \multicolumn{1}{c|}{0.96} &
  {\ul \textbf{0.09}} &
  0.0004 &
  \multicolumn{1}{c|}{0.95} &
  {\ul \textbf{0.31}} &
  0.0008 &
  \multicolumn{1}{c|}{0.92} &
  {\ul \textbf{0.39}} &
  0.0058 &
  \multicolumn{1}{r|}{0.87} &
  {\ul \textbf{0.43}} &
  0.0076 &
  0.83 \\ \hline
\end{tabular}%
}
\end{table*}

\begin{figure}
    \centering
    \includesvg[width=\linewidth ]{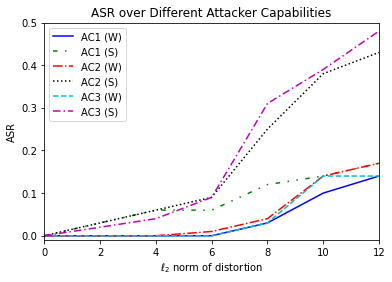}

    \caption{The Attack Success Rate (ASR) for the various Attacker Capability (AC) settings is compared for the White-Noise (W) and Stealthy (S) adversaries. We observe that while the Stealthy adversary consistently outperforms the White-Noise adversary across all attacker capability settings, this advantage is significantly more pronounced under AC2 and AC3 (i.e., when DDIM inversion is applied to approximate the latent starting point).}
    \label{fig:asr_attack_setting}
\end{figure}

\subsubsection{Without Boundary-Hiding Defense}

The results of our evaluation are presented in Table~\ref{table:eval_sd_attack}. Across all the perturbation levels considered, our method consistently outperforms the whitenoise baseline, achieving higher ASR while maintaining greater image fidelity. In particular, AC3 successfully mantains high image fidelity For illustration, we present an example of an image subjected to our attack under varying levels of perturbation, along with the corresponding RMSE and SSIM values in Figure~\ref{fig:diff_ssim_mse}. Finally, in Figure \ref{fig:example_removed}, we present examples of watermarked images and their corresponding watermark-removed versions, illustrating the high perceptual similarity between them.

\paragraph{Improved Attack Performance in AC2 and AC3}
 Our proposed watermark removal attack performs better on DDIM-inverted latents, both when conditioned on a null prompt and when conditioned on the original prompt used to generate the image. This is also illustrated in  Table~\ref{table:eval_sd_attack}, which shows that the ASR of the Stealthy adversary under AC2 and AC3 (i.e. when DDIM inversion is applied to approximate the latent starting point) is significantly higher than the ASR of the whitenoise adversary. This is because AC2 and AC3 attacks the latent space obtained via inversion, which aligns with the method used by the watermark detector to extract the watermarking codeword. In particular, the detector performs null-text inversion on the image to extract the codeword (which aligns with more closely with AC3). 
However, null-text inversion has been shown to introduce mild background inconsistencies \cite{mokady2023null, jiang2025pixelman}, which is seen by the images generated under AC3. Nonetheless, our evaluations suggest these inconsistencies are minor, and the resulting image pairs remain perceptually similar, as shown in Figures \ref{fig:example_removed} and \ref{fig:example_additional_sd21}.

\begin{figure*}[!htbp]
    \centering
    \begin{subfigure}[b]{0.2\linewidth}
        \centering
        \includegraphics[height=1.2in]{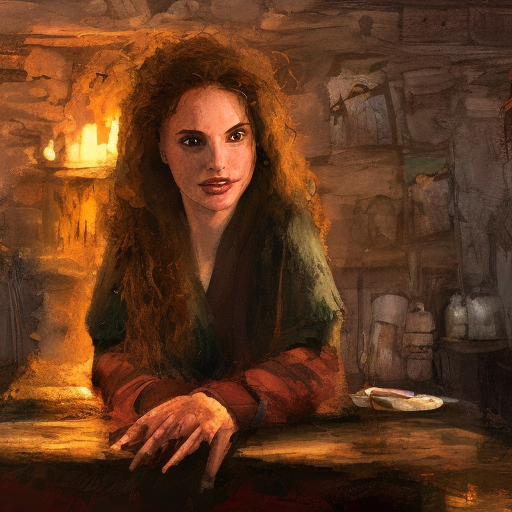}
        \caption{Original}
    \end{subfigure}%
    \begin{subfigure}[b]{0.2\linewidth}
        \centering
        \includegraphics[height=1.2in]{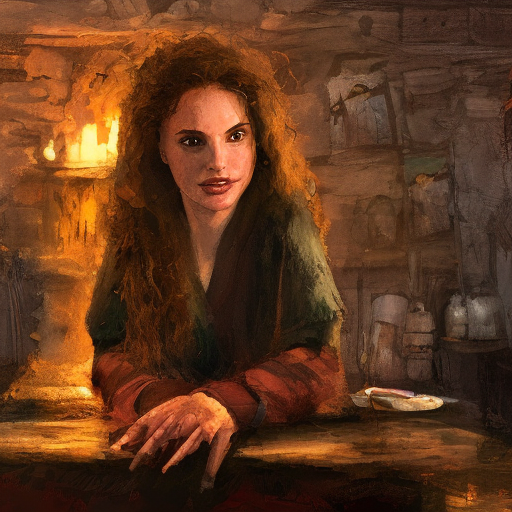}
        \caption{(4.0, 0.9579, 0.00007)}
    \end{subfigure}%
    \begin{subfigure}[b]{0.2\linewidth}
        \centering
        \includegraphics[height=1.2in]{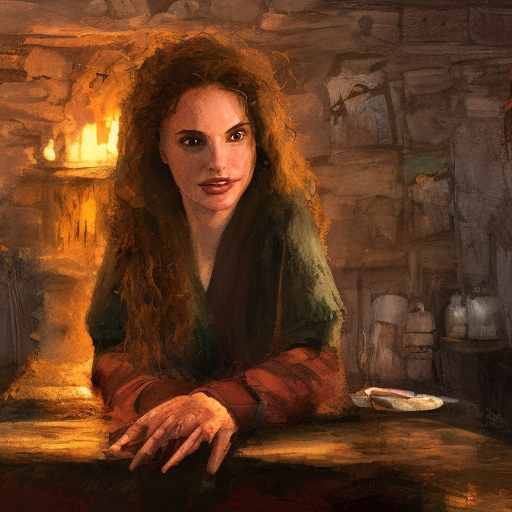}
        \caption{(6.0, 0.9501, 0.00033)}
    \end{subfigure}%
    \begin{subfigure}[b]{0.2\linewidth}
        \centering
        \includegraphics[height=1.2in]{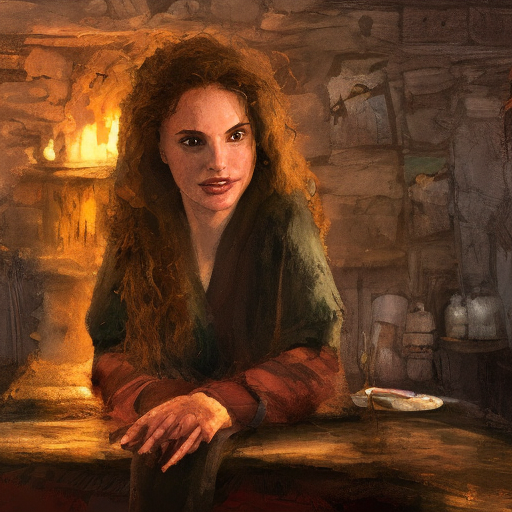}
        \caption{(8.0, 0.9370, 0.00063)}
    \end{subfigure}%
    \begin{subfigure}[b]{0.2\linewidth}
        \centering
        \includegraphics[height=1.2in]{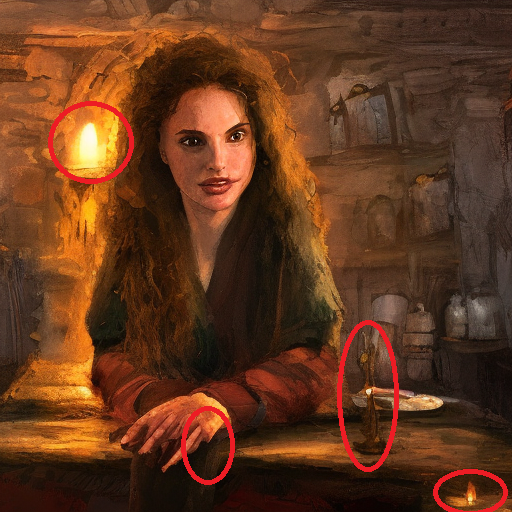}
        \caption{(10.0, 0.8741, 0.00069)}
    \end{subfigure}%
    
    \caption{ The original watermarked image, followed by images generated using our stealthy attack. Each attacked image is annotated with three values: the $\ell_2$ norm of the perturbation, Root Mean Squared Error (RMSE), and Structural Similarity Index Measure (SSIM), all measured relative to the original image.  In (e), we highlight noticeable artifacts introduced by the attack. The prompt used is ``{\em young, curly haired, redhead Natalie Portman  as a optimistic, cheerful, giddy medieval innkeeper in a dark medieval inn. dark shadows, colorful, candle light,  law contrasts, fantasy concept art by Jakub Rozalski, Jan Matejko, and J.Dickenson}''}
    \label{fig:diff_ssim_mse}
\end{figure*}

\begin{figure*}[!htbp]
    \centering
    \begin{subfigure}[b]{0.5\linewidth}
        \centering
        \includegraphics[height=1.5in]{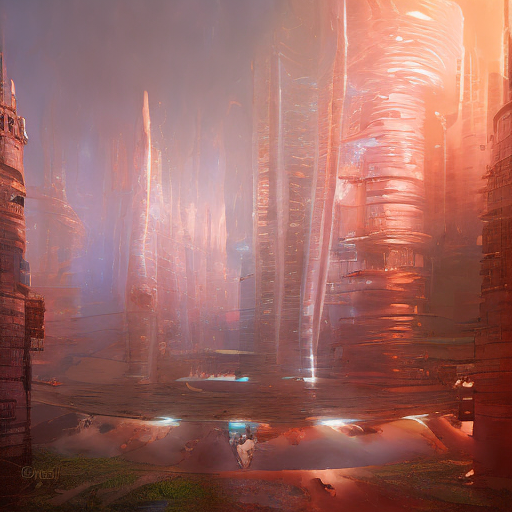}
        \includegraphics[height=1.5in]{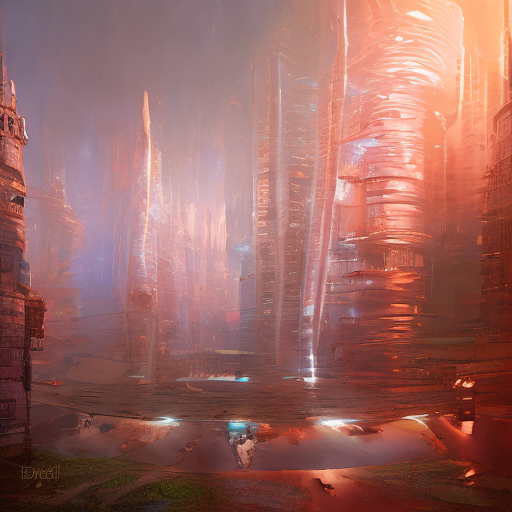}
        \caption{MSE $=0.0023$, SSIM $=0.9681$}
    \end{subfigure}%
    \begin{subfigure}[b]{0.5\linewidth}
        \centering
        \includegraphics[height=1.5in]{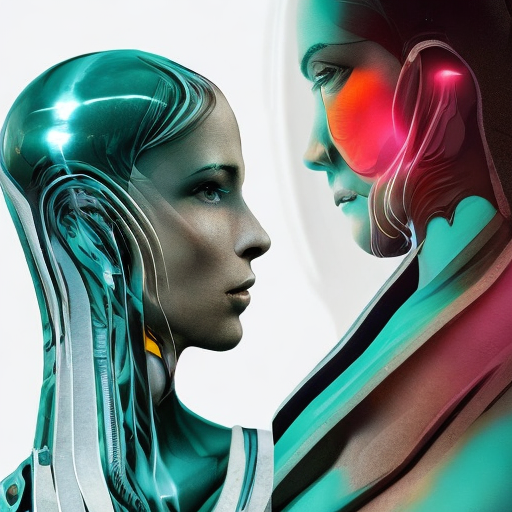}
        \includegraphics[height=1.5in]{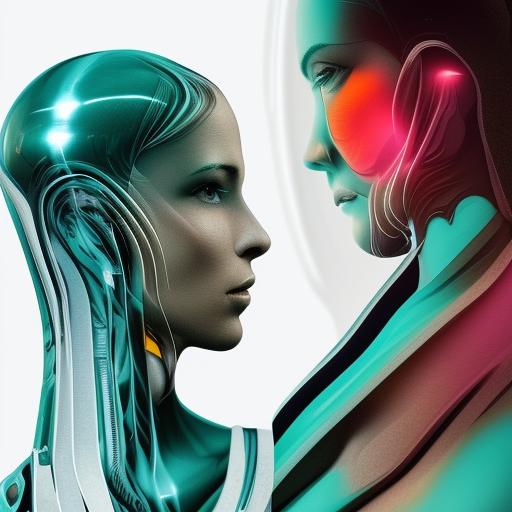}
        \caption{MSE $=0.0035$, SSIM $=0.9473$}
    \end{subfigure}%
    \\
    \begin{subfigure}[b]{0.5\linewidth}
        \centering
        \includegraphics[height=1.5in]{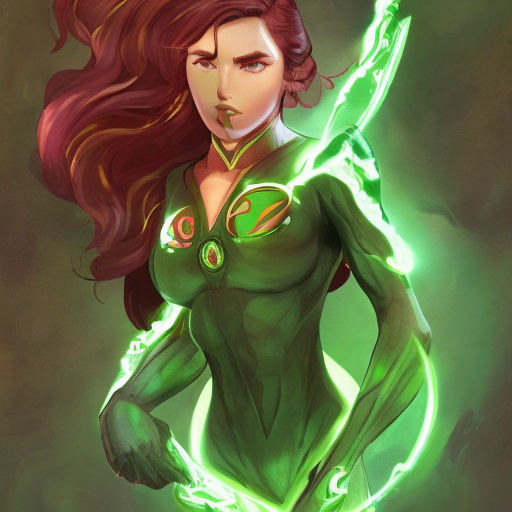}
        \includegraphics[height=1.5in]{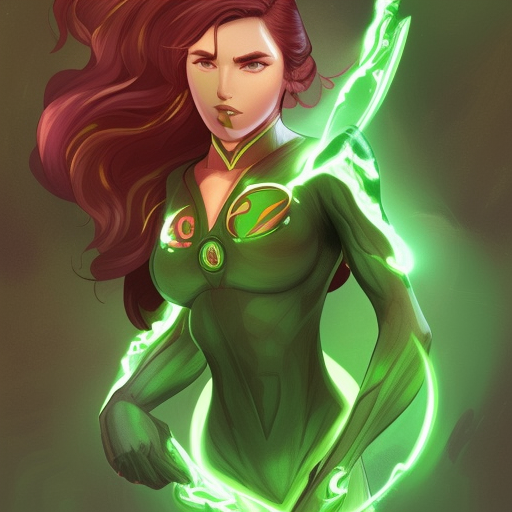}
        \caption{MSE $=0.0005$, SSIM $=0.9347$}
    \end{subfigure}%
    \begin{subfigure}[b]{0.5\linewidth}
        \centering
        \includegraphics[height=1.5in]{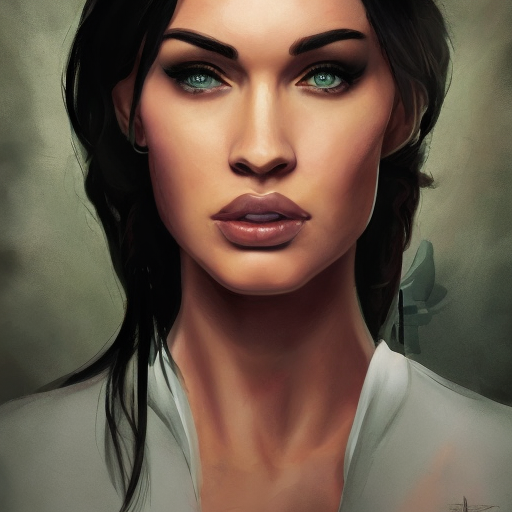}
        \includegraphics[height=1.5in]{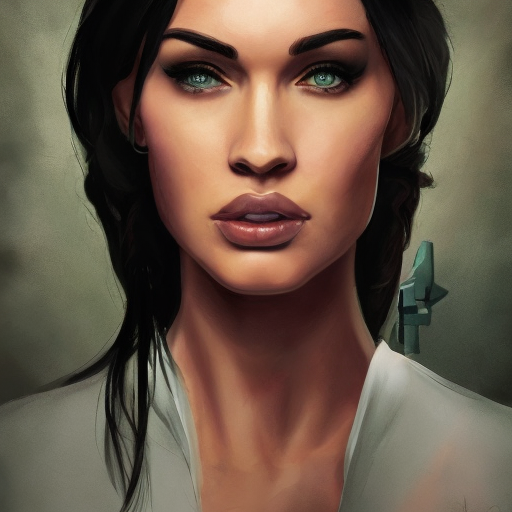}
        \caption{MSE $=0.0005$, SSIM $=0.9639$}
    \end{subfigure}%

    \caption{Examples of watermarked images and their corresponding watermark-removed versions. The removal perturbation has an $\ell_2$ norm of 8.0. Each subcaption shows the  RMSE and SSIM  between the original and the watermark-removed image.}
    \label{fig:example_removed}
\end{figure*}

\subsubsection{Boundary-Hiding Defense} \label{subsubsection:boundary_hiding_performance}
Consistent with our theoretical result described in Section \ref{section:defense}, when applying the boundary-hiding defense, the performance of the Stealthy attack (in AC3) reduces to that of the whitenoise attack. 



\subsection{Performance Overhead}
We evaluated the computational overhead introduced by our proposed boundary-hiding defense when integrated into the Stable Diffusion 2.1 pipeline. This defense mechanism introduces two additional computational steps into the initial latent generation phase and the watermark detection phase:

    

\subsubsection*{Time Overhead}
In total, the boundary-hiding defense introduces an additional computational overhead of approximately $1.1$\,s per image generated, excluding the one-time cost of generating the secret boundary-hiding transformation. This corresponds to an approximately $8.5\%$ increase in generation time.

\subsubsection*{Memory Overhead}
The defense introduces a memory overhead of approximately $1.07$\,GB,  due to the storage of the transformation matrix. We note that this matrix does not need to remain persistently on the GPU, as it is only used to (1) transform watermarked latents before the diffusion process and (2) invert reversed latents before watermark detection.

Therefore, the proposed boundary-hiding defense introduces a modest additional time and memory overhead, making it a practical and highly scalable defense.

\section{Discussion}
\subsection{Adversary with stronger capability} \label{discussion:known_watermark_boundary}
Known watermark attack assumes that the adversary can  query an oracle to obtain  random watermarked objects. However, this model offers limited adversarial capabilities.

A natural question arises: what constitutes a meaningful increase in attacker capability beyond the known watermark attack?  Consider a probabilistic  oracle $\widetilde{\mathcal{O}_{\epsilon}}$, which on query,  outputs a tuple $(s, s')=(s+\epsilon \cdot n, det(k,s') )$ , where $k$ is the secret key established during setup,  $s$ is a randomly generated watermark codeword using $k$, and  $n$ is whitenoise sampled from the multivariate Gaussian $\mathcal{N}(0,I)$. 
This enhanced oracle reveals more information, specifically the decision  boundary  of a randomly perturbed object.  Let us refer to such  setting as the {\em known boundary attack}.

While no adversary can have advantage over our proposed post-transformation method  under the {known watermark attack} (Theorem \ref{thm:mainresult}), we hypothesize that, given a sufficient number of queries to $\widetilde{\mathcal{O}_{\epsilon}}$,	 an adversary may be able to infer  information about the secret transformation. This raises the question of whether additional protective mechanism is possible. We believe this could be interesting future work.

\subsection{Scalability of Boundary Hiding Defense} \label{discussion:scalability}

The  proposed defense multiplies the $d$-dimensional starting latent point  with a randomly chosen orthonormal matrix.
 This matrix multiplication incurs a computational cost of $O(d^{2})$ operations.  While this is still manageable for models like  Stable Diffusion 2.1, where $d=4 \times 64 \times 64$, the computational cost grows quadratically with the number of dimensions. Such growth may become prohibitive in larger models, such as Stable Diffusion XL, where $d = 4 \times 128 \times 128$, or in use cases  requiring frequent transformations, for example video-generation.
 
To reduce the computational cost, it would be interesting to find classes of transformations that satisfy Condition \ref{condition:transformation} while allowing for efficient matrix-vector multiplication.  We believe that such transformations exist, see \cite{10.5555/3174304.3175339, matrix_lecture}. 
For example, it might be possible to perform sparse  transformation  in the Fourier domain and leverage on the Fast Fourier Transform (FFT) for acceleration. Other candidates may include transformations composed of sparse matrices that still satisfy the condition \cite{10.5555/3692070.3693765}.  Additionally, a relaxed version of  Condition \ref{condition:transformation} may suffice to  to hide the boundary.  We leave these directions for  future work.

\color{black}

\section{Related Works}

\subsection{Digital Watermarking}
Image watermarking has been a topic of extensive research for several decades, 
with the earliest known work on digital watermarking being published in 1994 \cite{413536}.
Although early work in digital watermarking of images is done in the pixel space (e.g. \cite{DBLP:conf/spieSR/KutterJB97, DBLP:conf/mm/KankanhalliRR98}), subsequent work primarily focus on watermarking the frequency domain of the digital object,
as such watermarking techniques are robust against mild affine and geometric perturbations. 
Since then, digital watermarking has been used for a variety of use-cases \cite{DBLP:journals/access/MalanowskaMAMK24}, such 
as copyright protection, content authentication, tamper detection, and fingerprinting. 

Recently, there has been growing interest in watermarking AI-generated content, such as the output of large language models (LLMs) and diffusion models. This may be due to several reasons, including tracking potential misuse of such models and embedding authorship information. \cite{DBLP:conf/icml/KirchenbauerGWK23} introduced the notion of undetectable watermarks in LLMs, which involves the use of the watermarking key as the pseudorandom number generator seed. Since then, there has been extensive work in ensuring that
such watermarked outputs are robust, i.e. the watermark in watermarked LLM outputs 
are still detectable even after mild paraphrasing. 

Watermarking techniques for diffusion models can be broadly categorized as follows. 
\begin{enumerate}
    \item \textbf{Post-processing watermarking}: Embeds watermarks directly into generated images but is vulnerable to common attacks like removal or ambiguity attacks\cite{DBLP:conf/ccnc/SongSML10}.
    \item \textbf{Fine-tuning-based watermarking}: Modifies diffusion models to embed watermarks during generation and trains a detector to extract them \cite{DBLP:conf/iccv/FernandezCJDF23, DBLP:journals/corr/abs-2405-02696}. However, these schemes have been shown to be removable by further fine-tuning the diffusion model \cite{hu2024stablesignatureunstableremoving}.
    \item \textbf{Latent space watermarking}: Embeds watermarks in the latent space starting point \cite{gunn2025undetectable,DBLP:conf/cvpr/YangZCF0Y24,wen2023treeringwatermarksfingerprintsdiffusion}. However, some methods, like Tree-Ring Watermarking \cite{wen2023treeringwatermarksfingerprintsdiffusion}, alter the latent distribution, making them detectable via steganographic attacks \cite{yang2024can}. Empirical evaluations \cite{gunn2025undetectable} also show that Gaussian Shading \cite{DBLP:conf/cvpr/YangZCF0Y24} introduces visually perceptible artifacts in the generated images. To the best of our knowledge, the PRC watermarking scheme \cite{gunn2025undetectable} is the only provably undetectable watermarking method and hence, is the focus of this paper. 
\end{enumerate}

\subsection{Watermark Removal Attacks}

Recently, there has been increasing interest in attacks targeting the removal of watermarks from the outputs of machine learning models \cite{DBLP:journals/corr/abs-2108-04974}. These attacks, known as watermark removal attacks, aim to perturb or alter watermarked outputs in a way that renders the embedded watermark undetectable while preserving the usability of the generated content \cite{DBLP:journals/corr/abs-2402-16187}. For instance, \cite{DBLP:conf/nips/ZhaoZSVGKVWL24} demonstrated that traditional pixel-level digital watermarking schemes are vulnerable to regeneration attacks, where a specially trained generative AI model creates a semantically similar but unwatermarked version of the image.

To address the limitations of post-hoc pixel-level watermarking, recent research has explored embedding watermarks into the initial latent starting point of the generative process \cite{DBLP:conf/cvpr/YangZCF0Y24, gunn2025undetectable, wen2023treeringwatermarksfingerprintsdiffusion}. By integrating the watermark at the latent level, the resulting scheme becomes more robust to common attacks such as noise addition, filtering, and geometric distortions, which typically compromise traditional pixel-level watermarking techniques. To the best of our knowledge, no prior work has specifically addressed the removal of watermarks embedded in the latent space.

\section{Conclusion}

In this paper, we present a watermark removal attack targeting watermarks in the latent starting point, demonstrating its effectiveness against the PRC watermarking scheme. This attack illustrates that undetectability does not guarantee robustness against removal. To counteract this, we propose a boundary-hiding defense mechanism that involves performing a secret orthonormal transformation on the watermarked latents.  We showed  that by performing the transformation on pseudorandom watermarking scheme which meets a ``well-behaved'' property,  no adversary can outperform the baseline whitenoise attacker. Empirical evaluations on Stable Diffusion 2.1 show that our attack successfully removes watermarks while preserving both image fidelity and semantics. Moreover, our defense provides a practical, scalable solution to mitigate such threats. These findings highlight the importance of hiding boundaries in latent-space watermarking schemes to ensure robustness against removal attacks. 

\begin{acks}
This research/project is supported by the National Research Foundation, Singapore under its AI Singapore Programme (AISG Award No: AISG3-RP-2022-029). We also thank the anonymous reviewers for their constructive feedback, and Wenjie Qu for noticing a citation mistake.
\end{acks}


\bibliographystyle{ACM-Reference-Format}
\bibliography{ref}

\appendix   
\section{Code Implementation} \label{appendix:code}
The code to reproduce our experiments together with instructions  is available in the following repository:
\url{https://github.com/dezhanglee/watermark_removal}. This code largely reuses the code base from the original PRC watermarking paper \cite{gunn2025undetectable}, which is distributed under the MIT License. 

\section{Samples of Attacked Images} \label{appendix:additional_img}
\begin{figure*}[]
    \centering
\begin{subfigure}[b]{0.32\linewidth}
    \centering
    \includegraphics[height=1.0in]{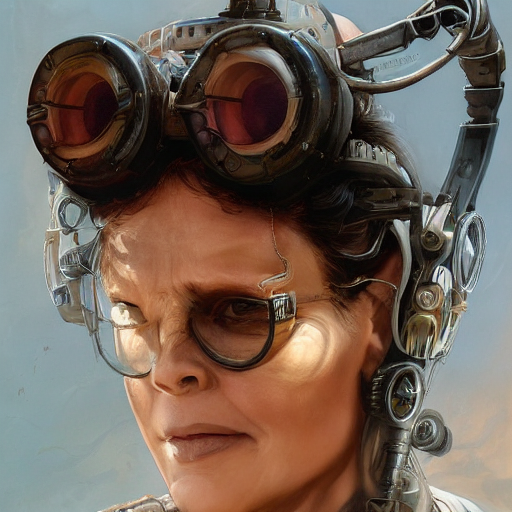}
    \includegraphics[height=1.0in]{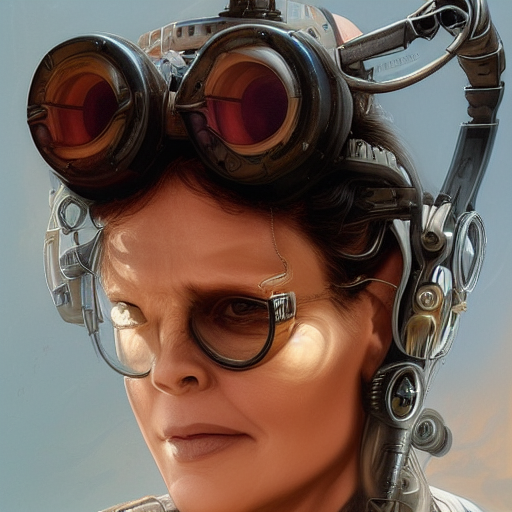}
    \caption{(0.0001, 0.9024)}
\end{subfigure}%
\begin{subfigure}[b]{0.32\linewidth}
    \centering
    \includegraphics[height=1.0in]{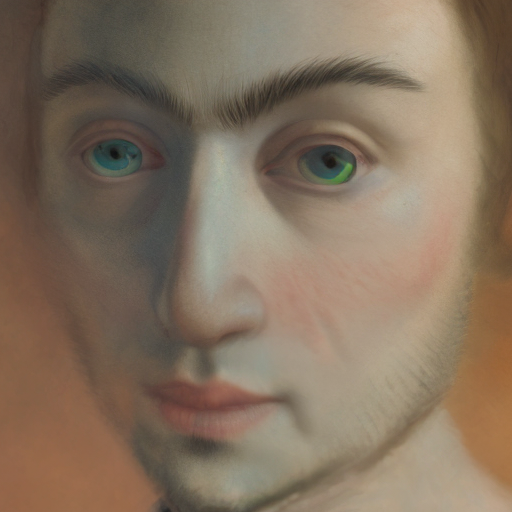}
    \includegraphics[height=1.0in]{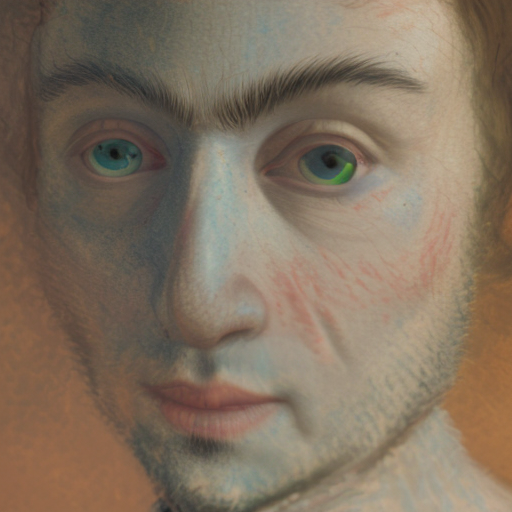}
    \caption{(0.0002, 0.8949)}
\end{subfigure}%
\begin{subfigure}[b]{0.32\linewidth}
    \centering
    \includegraphics[height=1.0in]{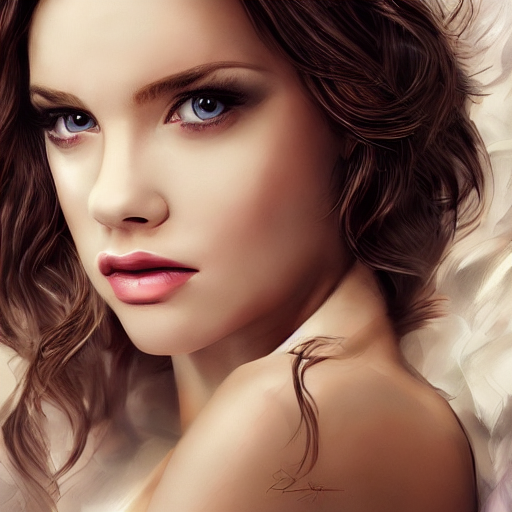}
    \includegraphics[height=1.0in]{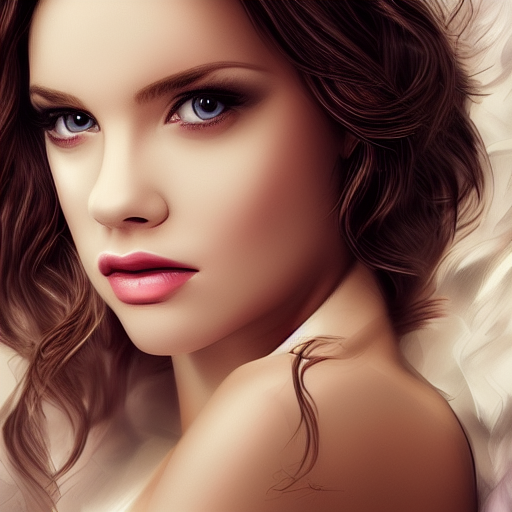}
    \caption{(0.0005, 0.9319)}
\end{subfigure}%
\\
\begin{subfigure}[b]{0.32\linewidth}
    \centering
    \includegraphics[height=1.0in]{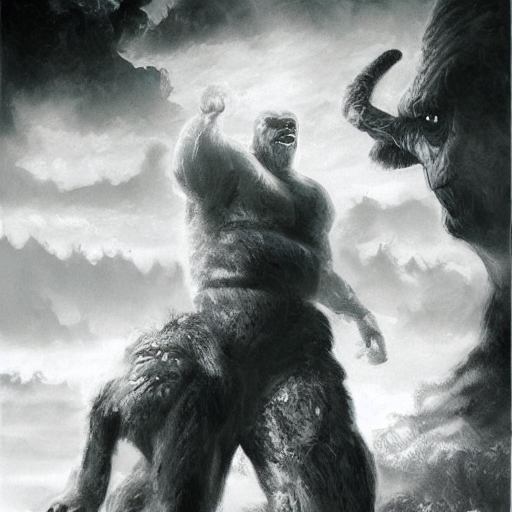}
    \includegraphics[height=1.0in]{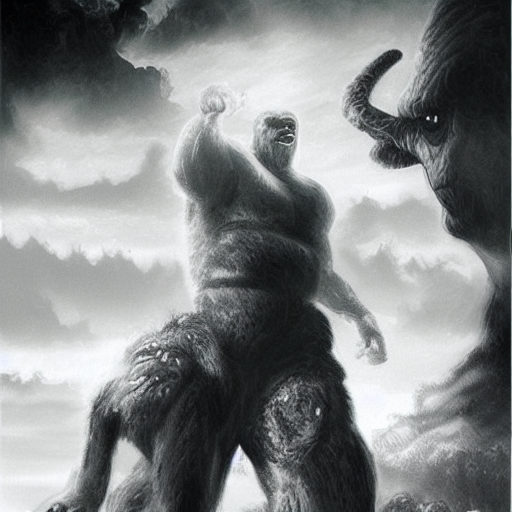}
    \caption{(0.0008, 0.9091)}
\end{subfigure}%
\begin{subfigure}[b]{0.32\linewidth}
    \centering
    \includegraphics[height=1.0in]{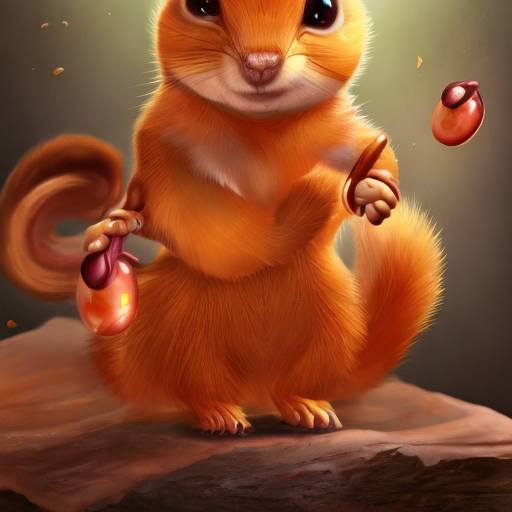}
    \includegraphics[height=1.0in]{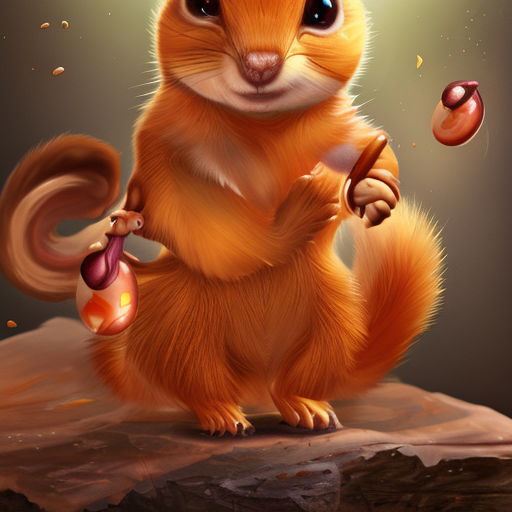}
    \caption{(0.0009, 0.9016)}
\end{subfigure}%
\begin{subfigure}[b]{0.32\linewidth}
    \centering
    \includegraphics[height=1.0in]{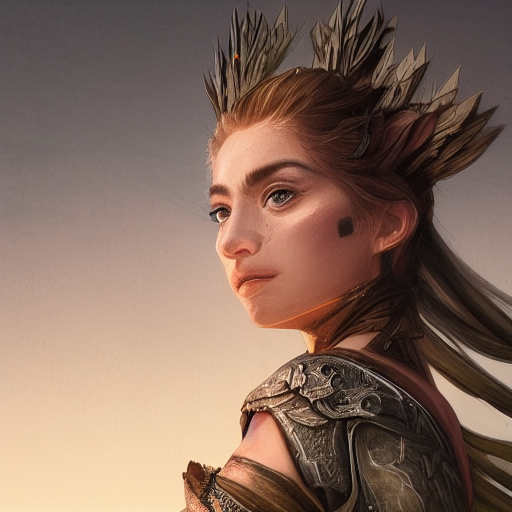}
    \includegraphics[height=1.0in]{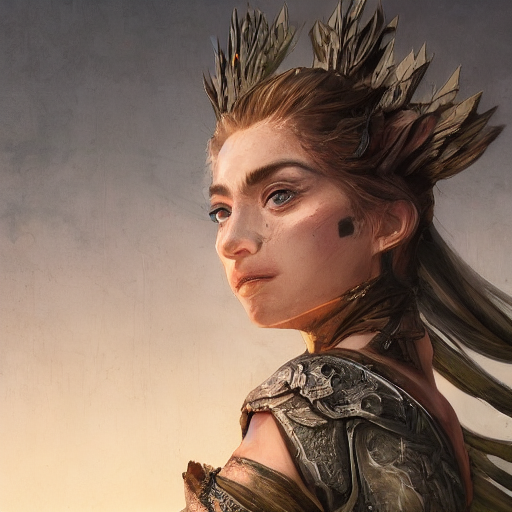}
    \caption{(0.0011, 0.8979)}
\end{subfigure}%
\\
\begin{subfigure}[b]{0.32\linewidth}
    \centering
    \includegraphics[height=1.0in]{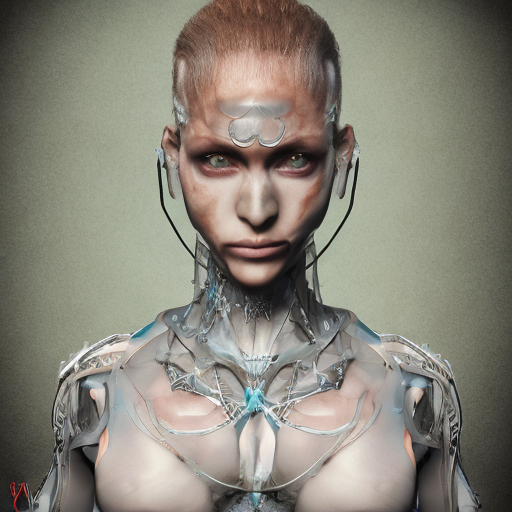}
    \includegraphics[height=1.0in]{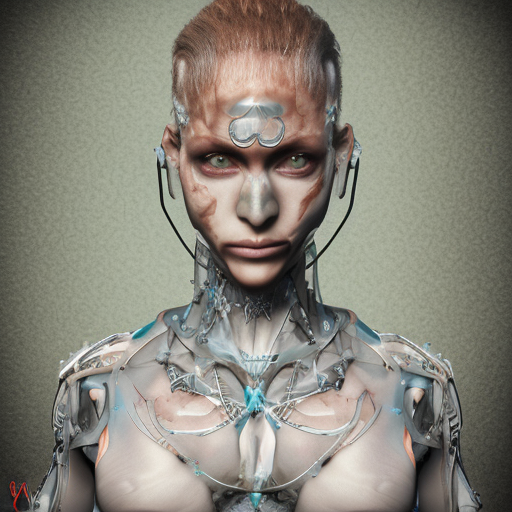}
    \caption{(0.0012, 0.8458)}
\end{subfigure}%
\begin{subfigure}[b]{0.32\linewidth}
    \centering
    \includegraphics[height=1.0in]{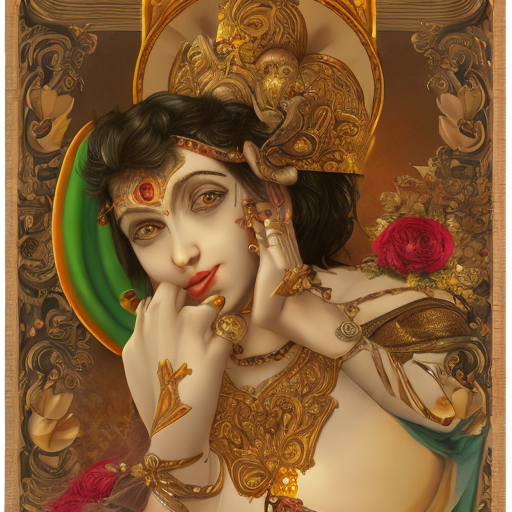}
    \includegraphics[height=1.0in]{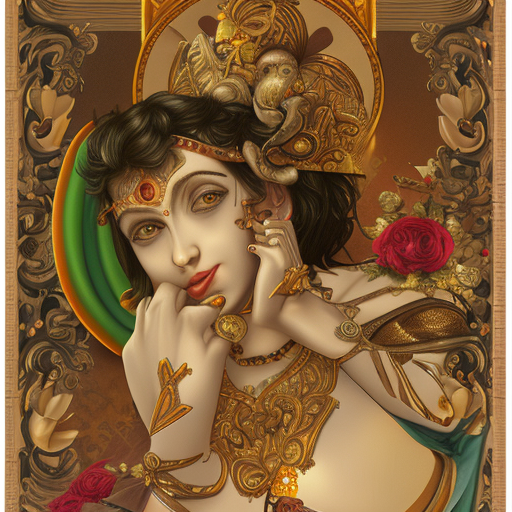}
    \caption{(0.0013, 0.7622)}
\end{subfigure}%
\begin{subfigure}[b]{0.32\linewidth}
    \centering
    \includegraphics[height=1.0in]{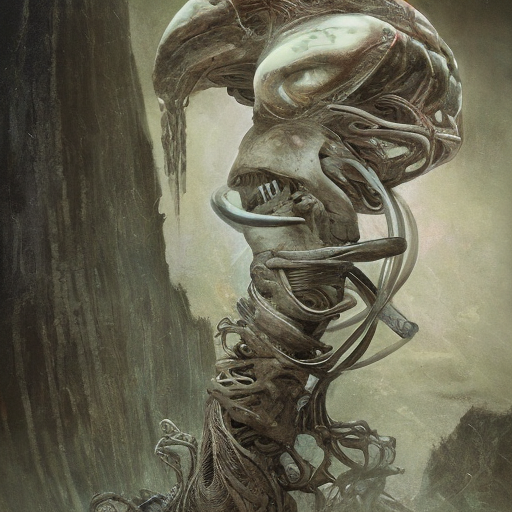}
    \includegraphics[height=1.0in]{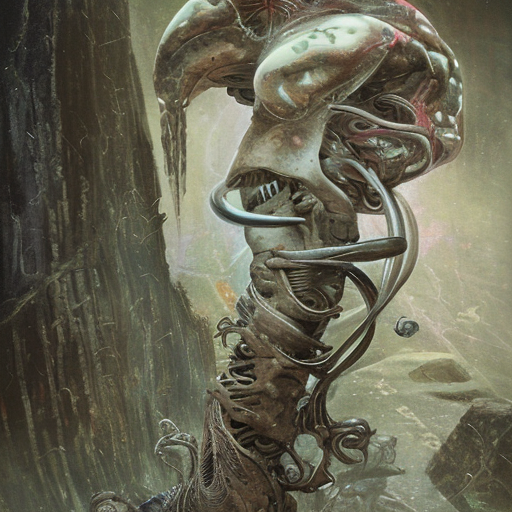}
    \caption{(0.0013, 0.7747)}
\end{subfigure}%
\\
\begin{subfigure}[b]{0.32\linewidth}
    \centering
    \includegraphics[height=1.0in]{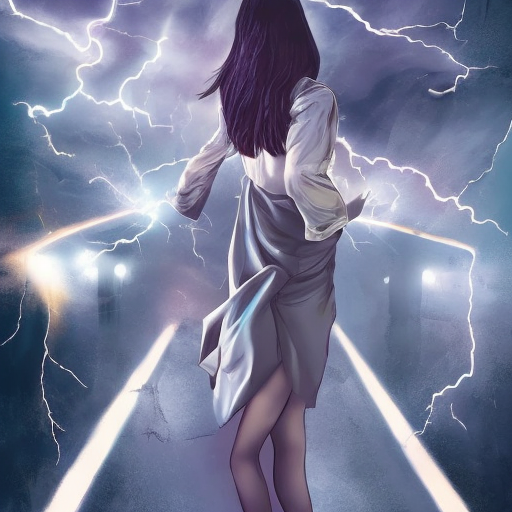}
    \includegraphics[height=1.0in]{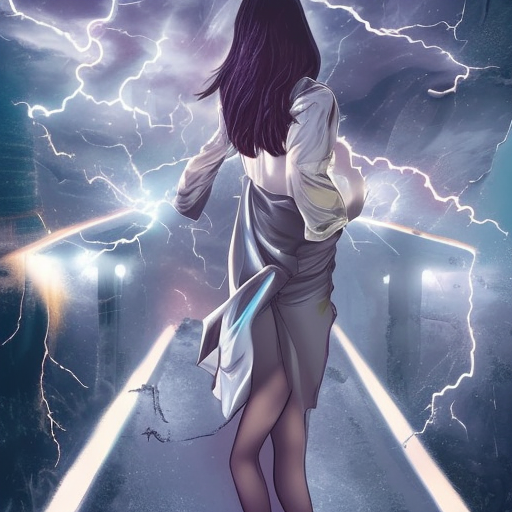}
    \caption{(0.0014, 0.8279)}
\end{subfigure}%
\begin{subfigure}[b]{0.32\linewidth}
    \centering
    \includegraphics[height=1.0in]{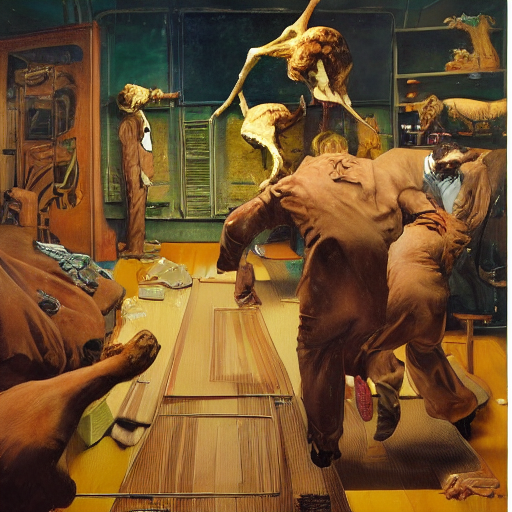}
    \includegraphics[height=1.0in]{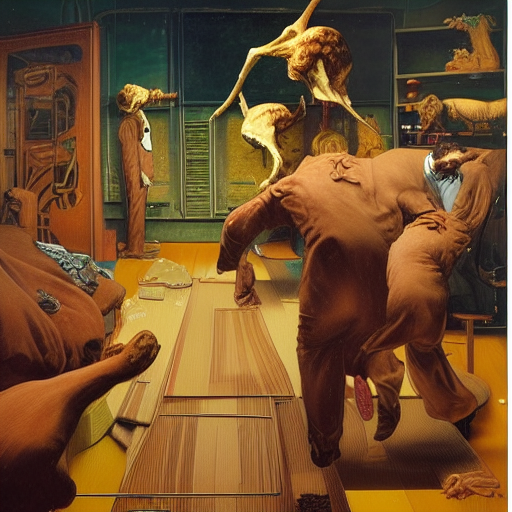}
    \caption{(0.0017, 0.8539)}
\end{subfigure}%
\begin{subfigure}[b]{0.32\linewidth}
    \centering
    \includegraphics[height=1.0in]{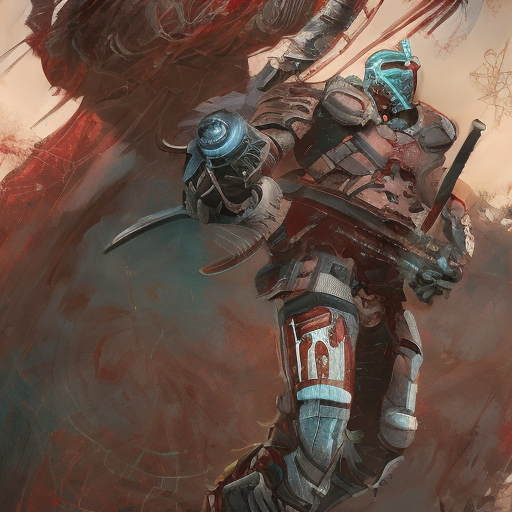}
    \includegraphics[height=1.0in]{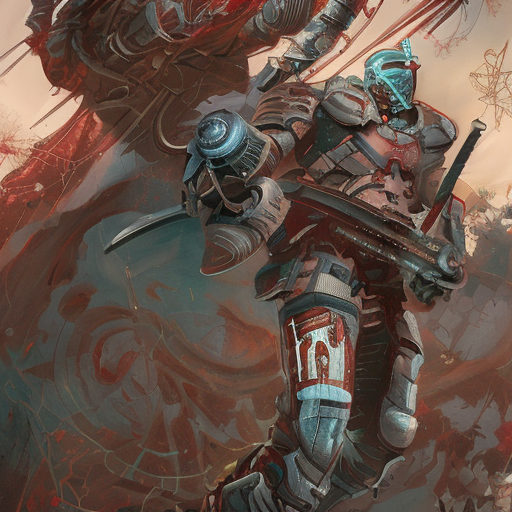}
    \caption{(0.0022, 0.8222)}
\end{subfigure}%
\\
\begin{subfigure}[b]{0.32\linewidth}
    \centering
    \includegraphics[height=1.0in]{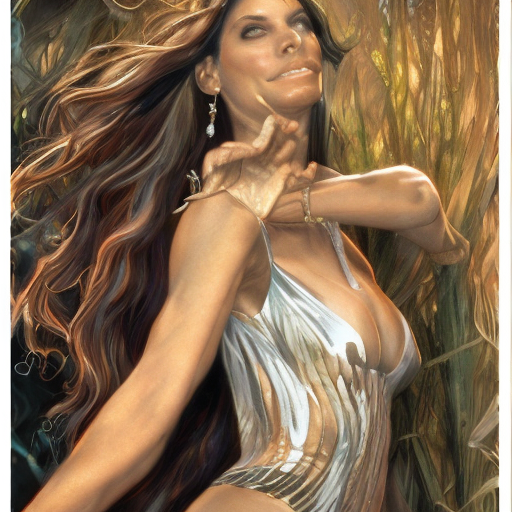}
    \includegraphics[height=1.0in]{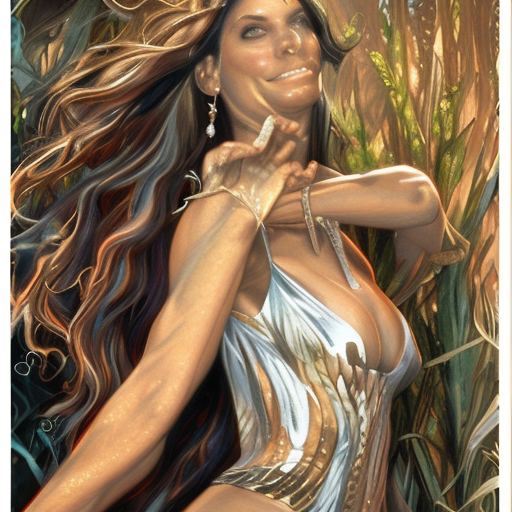}
    \caption{(0.0023, 0.7747)}
\end{subfigure}%
\begin{subfigure}[b]{0.32\linewidth}
    \centering
    \includegraphics[height=1.0in]{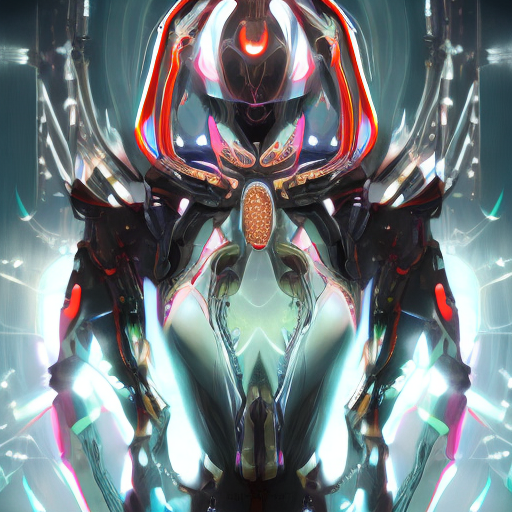}
    \includegraphics[height=1.0in]{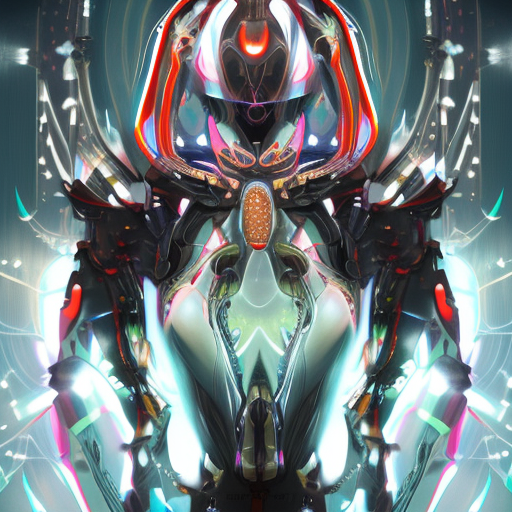}
    \caption{(0.0027, 0.8771)}
\end{subfigure}%
\begin{subfigure}[b]{0.32\linewidth}
    \centering
    \includegraphics[height=1.0in]{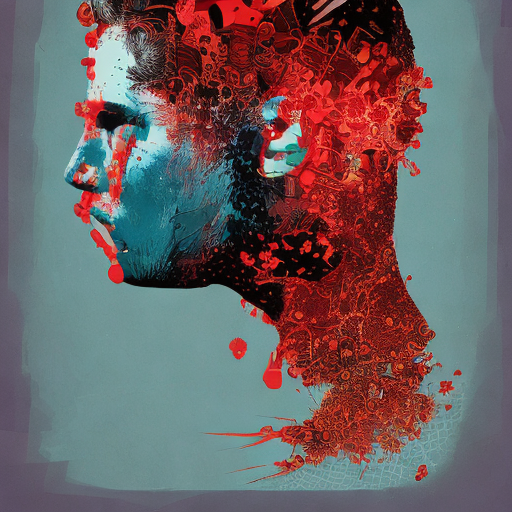}
    \includegraphics[height=1.0in]{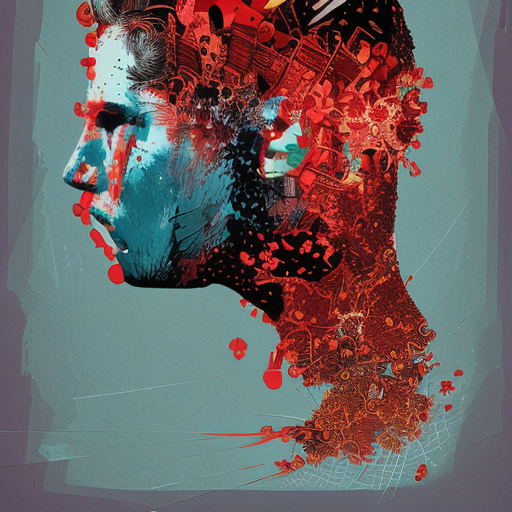}
    \caption{(0.0036, 0.7965)}
\end{subfigure}%
    \caption{Additional pairs of watermarked images (left), and their corresponding attacked images (right) generated using Stable Diffusion 2.1. Each subcaption shows the  RMSE and SSIM  between the image pairs, sorted by ascending RMSE.}
    \label{fig:example_additional_sd21}
\end{figure*}
Additional pairs of watermarked images and their corresponding versions after the stealthy attack are provided in Figure \ref{fig:example_additional_sd21}. Consistent with our evaluation results, these image pairs are perceptually similar. These images are part of a larger collection of $10,000$ image pairs, available at the following link: \url{https://github.com/dezhanglee/watermarked-images-samples}. They are generated using Stable Diffusion 2.1, under the same setting as described in Section \ref{section:eval_sd2.1}

\section{Performance Under Different LDM Models} \label{appendix:eval}
\subsection{Stable Diffusion 1.5}
In our evaluations, the parameters used are the same as the ones used for Stable Diffusion 2.1, as stated in Section \ref{section:eval_sd2.1}. The performance of our attacks and defense is similar to those reported in Section \ref{section:eval_sd2.1}, Table \ref{table:eval_sd_attack}. We believe that this is because both SD 2.1 and SD 1.5 shares the same latent space dimensions of $4 \times 64 \times 64$, and both models share largely the same architectures.

\begin{figure*}[]
    \centering
    \begin{subfigure}[b]{0.5\linewidth}
        \centering
        \includegraphics[height=1.5in]{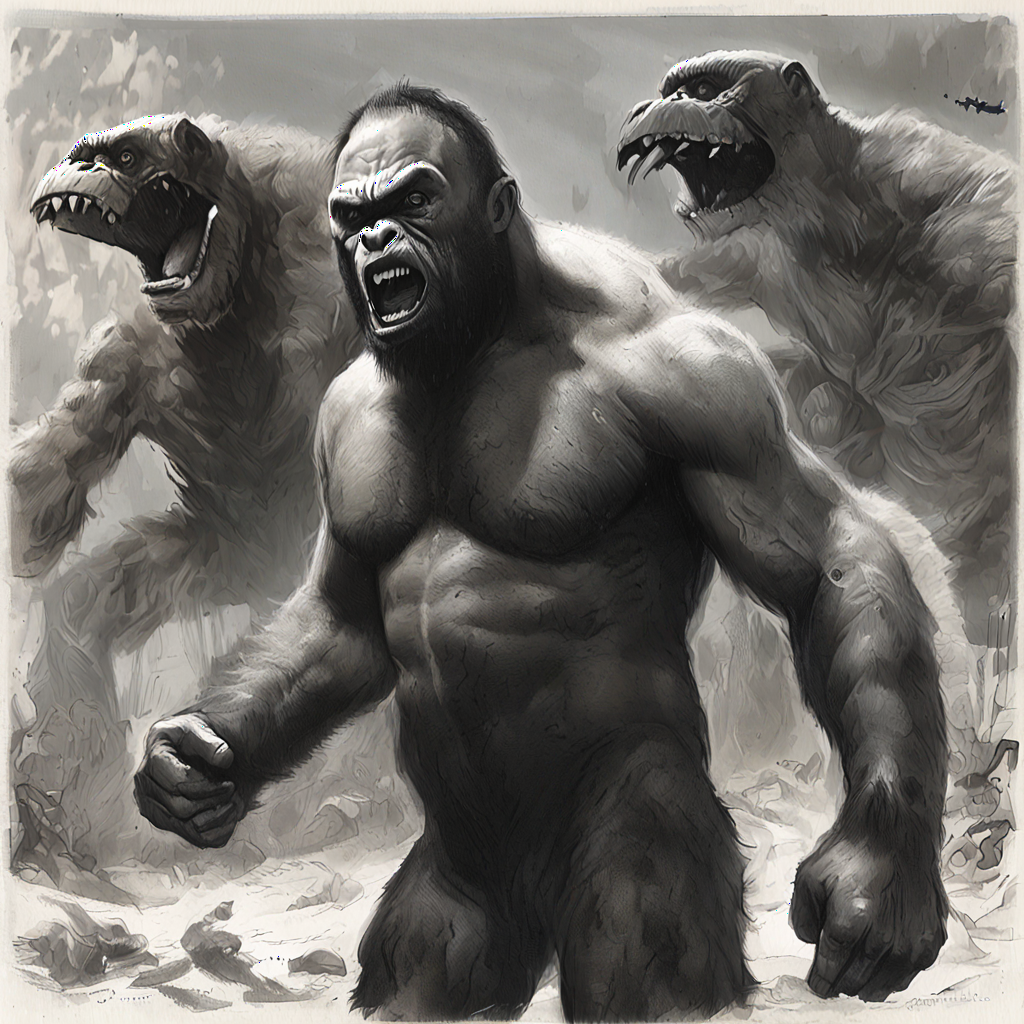}
        \includegraphics[height=1.5in]{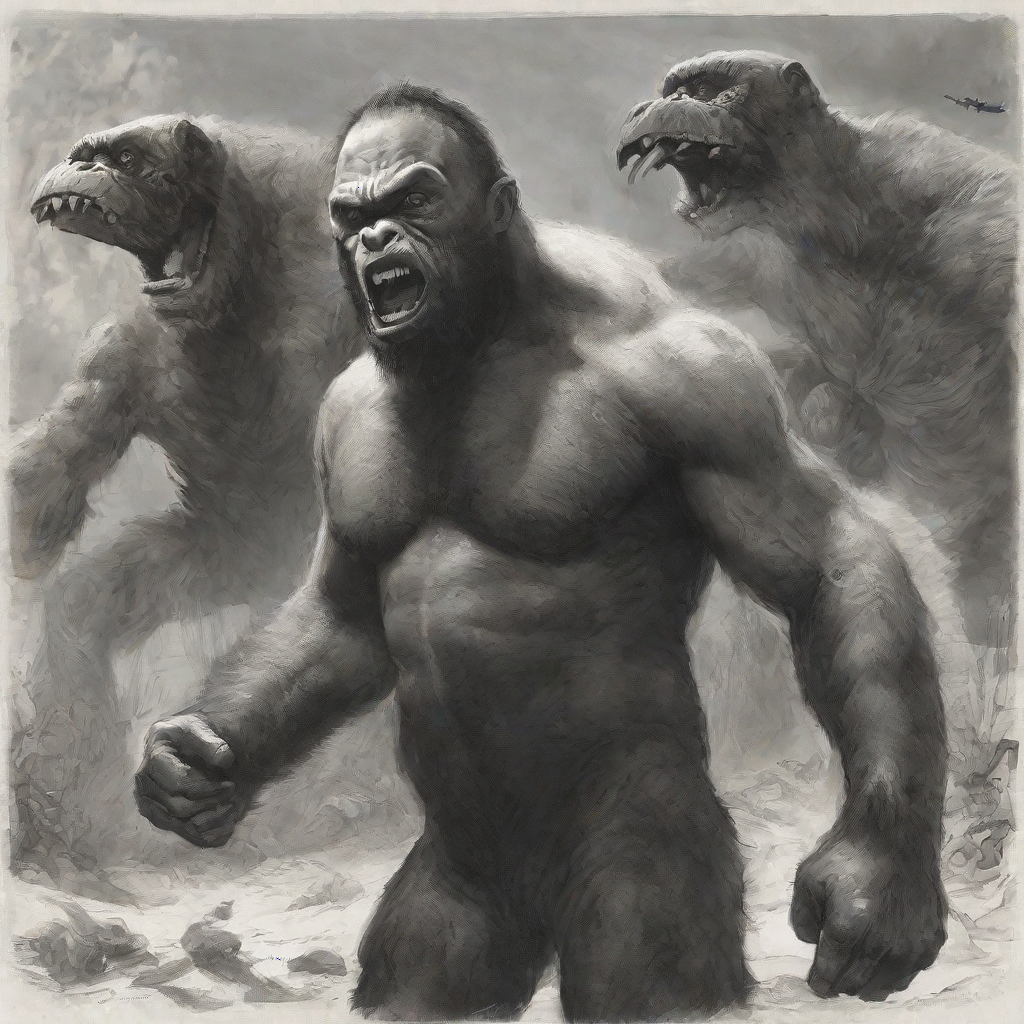}
        \caption{MSE $=0.0045$, SSIM $=0.7992$}
    \end{subfigure}%
    \begin{subfigure}[b]{0.5\linewidth}
        \centering
        \includegraphics[height=1.5in]{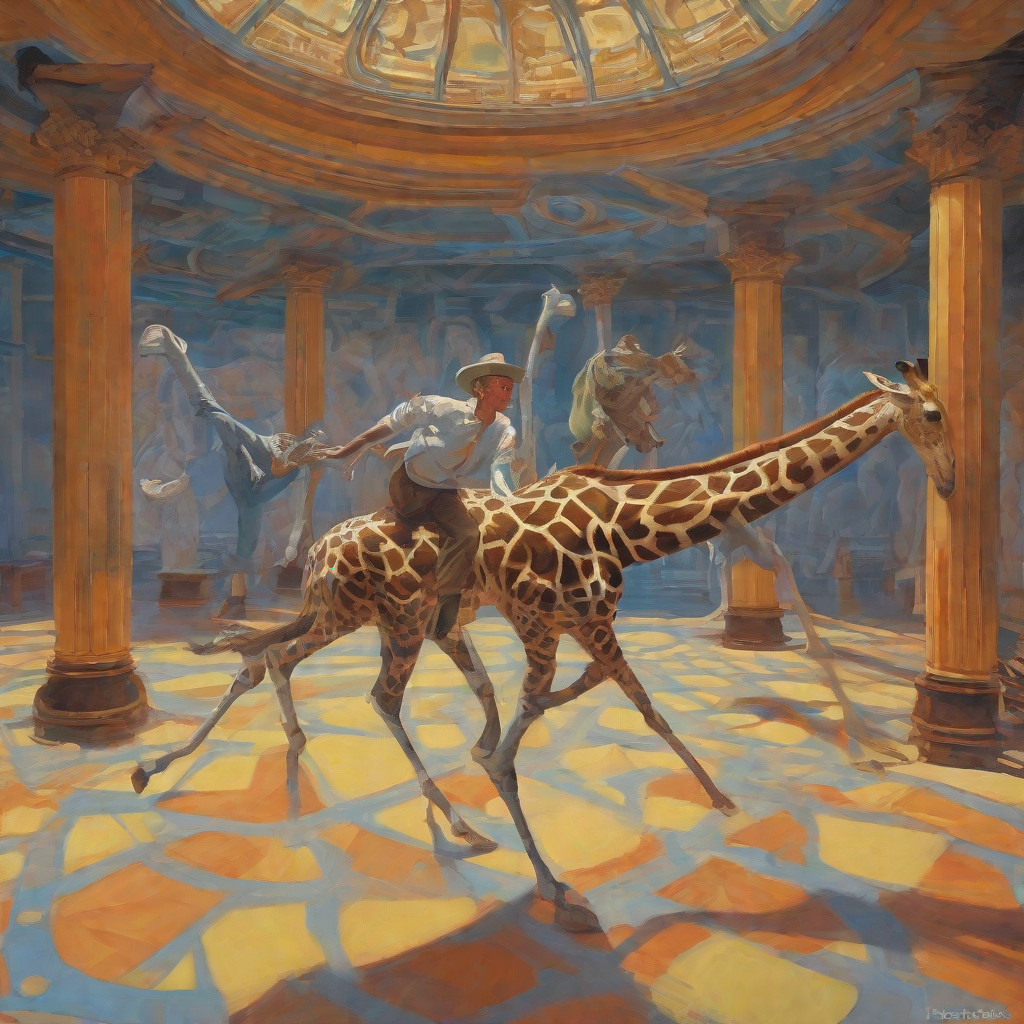}
        \includegraphics[height=1.5in]{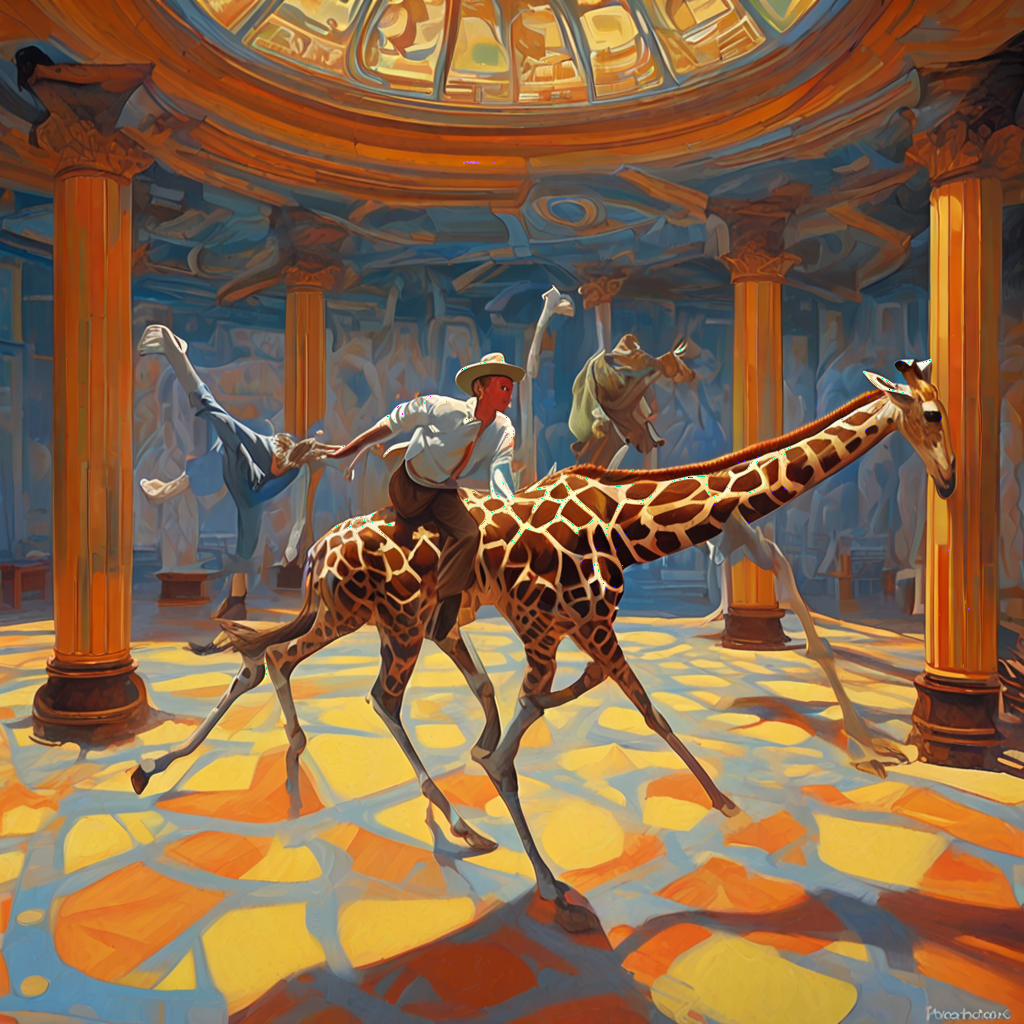}
        \caption{MSE $=0.0031$, SSIM $=0.7642$}
    \end{subfigure}%
    \\
     \begin{subfigure}[b]{0.5\linewidth}
        \centering
        \includegraphics[height=1.5in]{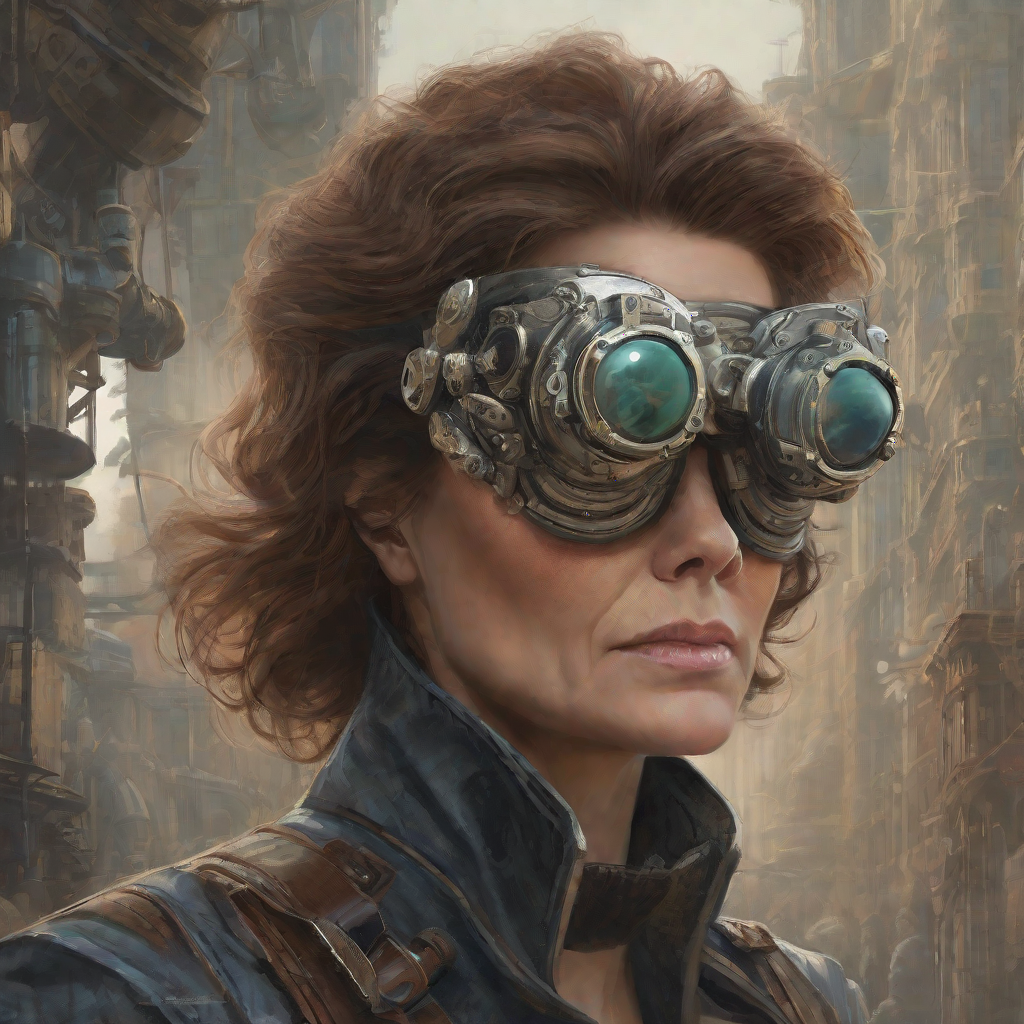}
        \includegraphics[height=1.5in]{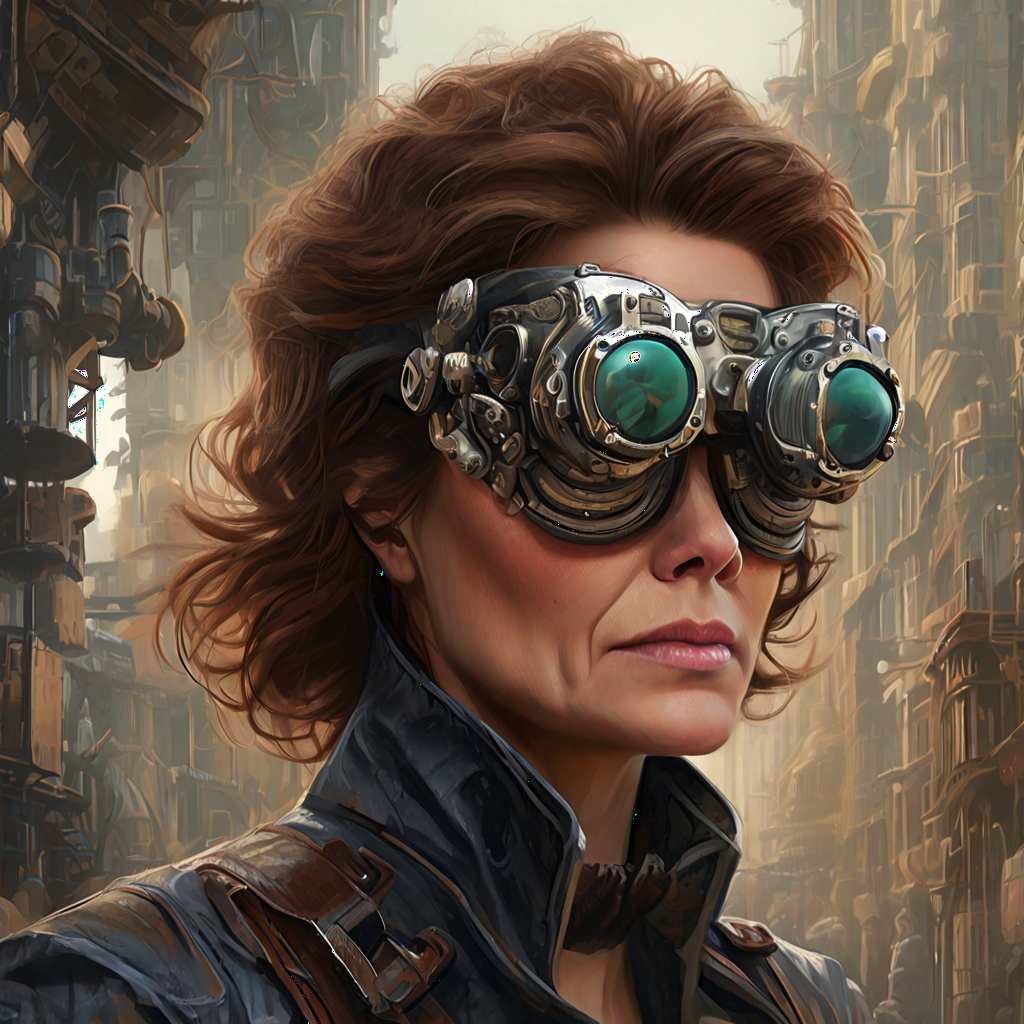}
        \caption{MSE $=0.0049$, SSIM $=0.7313$}
    \end{subfigure}%
    \begin{subfigure}[b]{0.5\linewidth}
        \centering
        \includegraphics[height=1.5in]{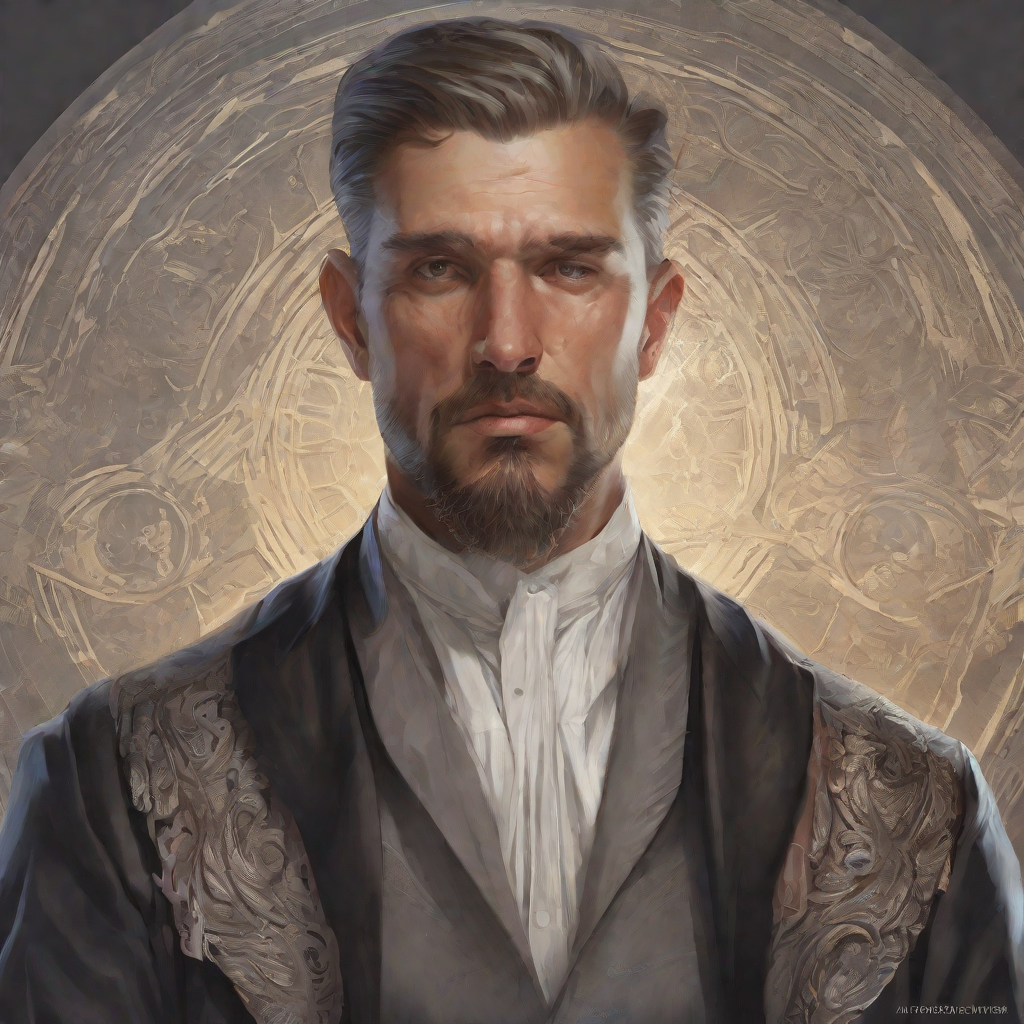}
        \includegraphics[height=1.5in]{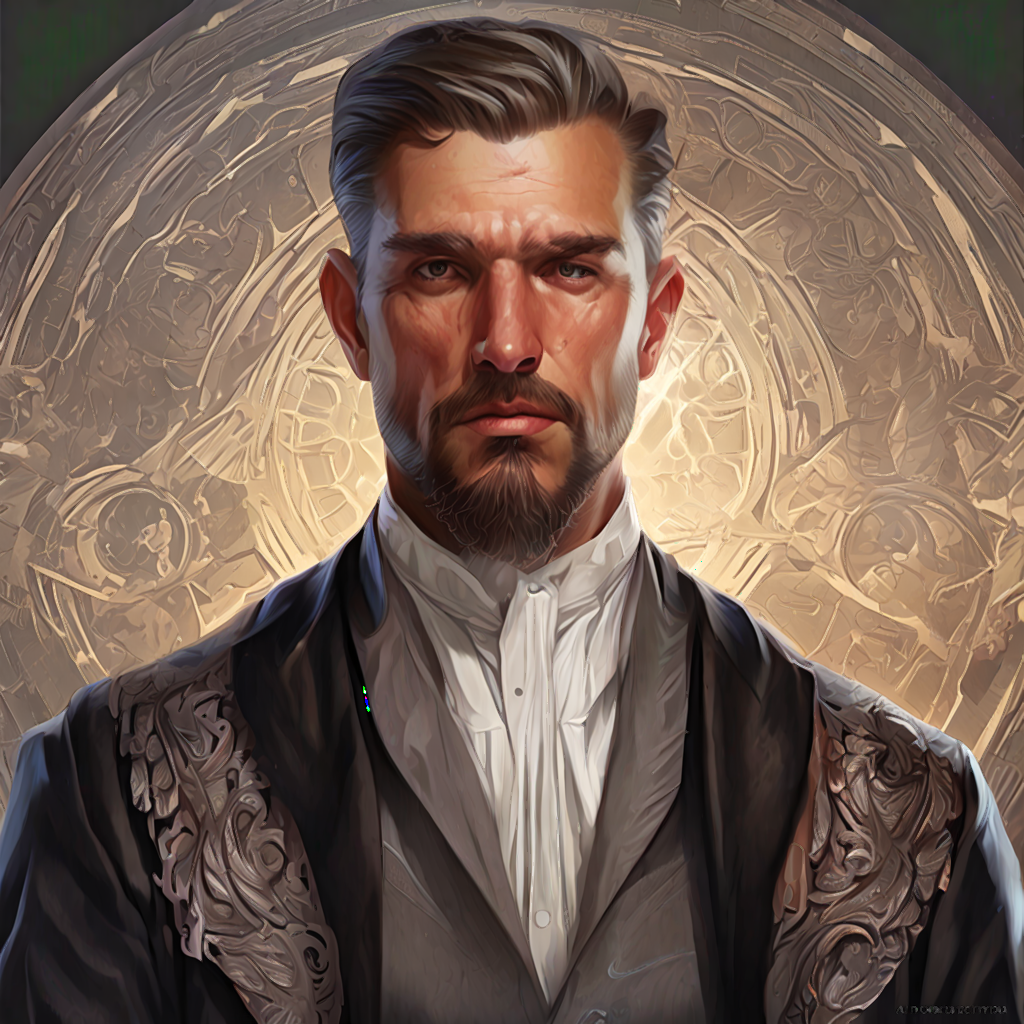}
        \caption{MSE $=0.0043$, SSIM $=0.7489$}
    \end{subfigure}%
    \\

    \caption{Examples of watermarked images (left) generated using SDXL, and their corresponding watermark-removed versions using the stealthy attack (right). The removal perturbation has an $\ell_2$ norm of 10.0. Each subcaption shows the  RMSE and SSIM  between the image pairs.}
    \label{fig:example_removed_sdxl}
\end{figure*}

\begin{table*}[]
\centering
\caption{Comparison of our proposed attack versus the white-noise attack under AC3, using Stable Diffusion XL. The results are consistent with our findings in Table \ref{fig:asr_attack_setting}. The result of the stealthy attack is {\ul \textbf{underlined in bold}}. }
\label{table:sdxl_asr}
\resizebox{2\columnwidth}{!}{%
\begin{tabular}{|l|cccccccccccrccc|}
\hline
\multirow{3}{*}{Attack} & \multicolumn{15}{c|}{Additional Perturbation, $\epsilon$ (Excluding Inversion Error)}                                                                                                                                                                                           \\ \cline{2-16} 
                        & \multicolumn{3}{c|}{4.0}                                 & \multicolumn{3}{c|}{6.0}                                 & \multicolumn{3}{c|}{8.0}                                 & \multicolumn{3}{c|}{10.0}                                & \multicolumn{3}{c|}{12.0}           \\ \cline{2-16} 
                        & ASR                 & RMSE   & \multicolumn{1}{c|}{SSIM} & ASR                 & RMSE   & \multicolumn{1}{c|}{SSIM} & ASR                 & RMSE   & \multicolumn{1}{c|}{SSIM} & ASR                 & RMSE   & \multicolumn{1}{c|}{SSIM} & ASR                 & RMSE   & SSIM \\ \hline
White Noise             & 0.01                & 0.0009 & \multicolumn{1}{c|}{0.82} & 0.03                & 0.0015 & \multicolumn{1}{c|}{0.82} & 0.03                & 0.0030 & \multicolumn{1}{c|}{0.80} & 0.09                & 0.0043 & \multicolumn{1}{r|}{0.78} & 0.12                & 0.0045 & 0.73 \\
{\ul \textbf{Stealthy}} & {\ul \textbf{0.06}} & 0.0009 & \multicolumn{1}{c|}{0.82} & {\ul \textbf{0.12}} & 0.0017 & \multicolumn{1}{c|}{0.81} & {\ul \textbf{0.19}} & 0.0033 & \multicolumn{1}{c|}{0.79} & {\ul \textbf{0.27}} & 0.0040 & \multicolumn{1}{r|}{0.75} & {\ul \textbf{0.33}} & 0.0049 & 0.73 \\ \hline
\end{tabular}%
}
\end{table*}


\subsection{Stable Diffusion XL}
We evaluated our attack and proposed defense on Stable Diffusion XL (SDXL) \cite{sdxl_paper}, using the implementation available on HuggingFace\footnote{\url{https://huggingface.co/stabilityai/stable-diffusion-xl-base-1.0}} under its default settings. Relative to SD 2.1, SDXL utilizes a larger latent space ($4 \times 128 \times 128$ compared to $4 \times 64 \times 64$), enabling the generation of higher-resolution images at $1024 \times 1024$ pixels (versus $512 \times 512$ for SD 2.1). 

Due to the $4\times$ larger latent space in SDXL, we increased the message length to $3,000$ bits. Given SDXL's significantly higher computational demands, our evaluation metrics were calculated based on 30 images generated from randomly selected prompts. We employed AC3, evaluating both the whitenoise and stealthy attack methods. 

\subsubsection{Evaluation Results}
The results of this evaluation are summarized in Table \ref{table:sdxl_asr}, and example pairs of attacked images are presented in Figure \ref{fig:example_removed_sdxl}. These results empirically demonstrate that the stealthy attack successfully removes watermarks while preserving image semantics.


\subsubsection{Performance of Boundary-Hiding Defense}
Consistent with our theoretical results in Section 5, when the boundary hiding defense is applied, the performance of the stealthy attack reduces to that of the whitenoise attack.

\subsubsection{Overhead of Boundary-Hiding Defense}
In our evaluations, the memory overhead of the secret transformation is $\sim 4.7$ GB, which is roughly $4\times $ the overhead incurred when the defense is employed on SD 2.1. This is consistent with our expectations, as the latent space of SDXL is $4 \times$ larger than SD 2.1.
\color{black}



\end{document}